\DeclareRobustCommand{\VAN}[3]{#2}
\let\VANthebibliography\thebibliography
\def\thebibliography{\DeclareRobustCommand{\VAN}[3]{##3}\VANthebibliography}
\newcommand\vphiqm{\overline{v_{\varphi}^2}}
\newcommand\Deltac{\Delta_R}
\newcommand{\phis}{\Phi_*}
\newcommand{\phih}{\Phi_{\mathrm{h}}}
\newcommand{\rhoh}{\rho_{\mathrm{h}}}
\newcommand{\rhos}{\rho_*}
\newcommand{\bh}{b_{\mathrm{h}}}
\newcommand{\bM}{b_{\mathrm{M}}}
\newcommand{\vh}{v_{\mathrm{h}}}
\newcommand{\bplus}{\mathscr{B}^{+}}
\newcommand{\bminus}{\mathscr{B}^{-}}
\newcommand{\cplus}{\mathscr{C}^+}
\newcommand{\cminus}{\mathscr{C}^-}
\newcommand{\dplus}{\mathscr{D}^{+}}
\newcommand{\dminus}{\mathscr{D}^{-}}
\newcommand{\bpm}{\mathscr{B}^{\pm}}
\newcommand{\cpm}{\mathscr{C}^{\pm}}
\newcommand{\dpm}{\mathscr{D}^{\pm}}
\newcommand{\bzero}{\mathscr{B}^0}
\newcommand{\czero}{\mathscr{C}^0}
\newcommand{\dzero}{\mathscr{D}^0}
\newcommand{\opmzero}{\Omega^{\pm 0}}
\newcommand{\bpzero}{\mathscr{B}^{+0}}
\newcommand{\cpzero}{\mathscr{C}^{+0}}
\newcommand{\dpzero}{\mathscr{D}^{+0}}
\newcommand{\opzero}{\Omega^{+0}}
\newcommand{\dmzero}{\mathscr{D}^{-0}}
\newcommand{\bzplus}{\mathscr{B}_z^{+}}
\newcommand{\bzminus}{\mathscr{B}_z^{-}}
\newcommand{\czplus}{\mathscr{C}_z^+}
\newcommand{\czminus}{\mathscr{C}_z^-}
\newcommand{\dzminus}{\mathscr{D}_z^{-}}
\newcommand{\bzpm}{\mathscr{B}_z^{\pm}}
\newcommand{\czpm}{\mathscr{C}_z^{\pm}}
\newcommand{\dzpm}{\mathscr{D}_z^{\pm}}
\newcommand{\omegapm}{\Omega^{\pm}}
\newcommand{\omegazero}{\Omega^{0}}
\newcommand{\omegaminus}{\Omega^{-}}
\newcommand{\projz}{\mathrm{Pr}}
\newcommand{\vphim}{\overline{v_{\varphi}}}
\newcommand{\vrm}{\overline{v_{R}}}
\newcommand{\vzm}{\overline{v_{z}}}
\providecommand{\customgenericname}{}
\newcommand{\newcustomtheorem}[2]{%
  \newenvironment{#1}[1]
  {%
   \renewcommand\customgenericname{#2}%
   \renewcommand\theinnercustomgeneric{##1}%
   \innercustomgeneric
  }
  {\endinnercustomgeneric}
}
\newtheorem{theorem}{Theorem}[section]
\title[Anisotropy ansatz for axisymmetric systems]{Anisotropy ansatz for the axisymmetric Jeans equations}
\author[L. De Deo, L. Ciotti \& S. Pellegrini]{
Leonardo De Deo,$^{1,2,3}$\thanks{E-mail: leonardo.dedeo2@unibo.it}
Luca Ciotti,$^{1}$
and Silvia Pellegrini$^{1,2}$
\\
$^{1}$Department of Physics and Astronomy, University of Bologna, via Gobetti 93/2, 40129 Bologna, Italy\\
$^{2}$Istituto Nazionale di Astrofisica (INAF), Osservatorio di Astrofisica e Scienza dello Spazio di Bologna (OAS), Via Gobetti 93/3, Bologna 40129, Italy\\
$^{3}$ International PhD College - Collegio Superiore, University of Bologna, Italy
}
\date{Accepted 2024 April 8. Received 2024 March 27; in original form 2024 January 26}
\begin{document}
\label{firstpage}
\pagerange{\pageref{firstpage}--\pageref{lastpage}}
\maketitle

\begin{abstract}
The Jeans equations do not form a closed system, and to solve them a parametrization relating the velocity moments is often adopted. For axisymmetric models, a phenomenological choice (the \lq\lq $b$-ansatz") is widely used for the relation between the vertical ($\sigma_z^2$) and radial ($\sigma_R^2$) components of the velocity dispersion tensor, thus breaking their identity present in two-integral systems. However, the way in which the ansatz affects the resulting kinematical fields can be quite complicated, so that the analysis of these fields is usually performed only after numerically computing them. We present here a general procedure to study the properties of the ansatz-dependent fields $\vphiqm$, $\Delta=\vphiqm - \sigma_z^2$ and $\Deltac = \vphiqm - \sigma_R^2$. Specifically, the effects of the $b$-ansatz can be determined before solving the Jeans equations once the behaviour over the ($R,z$)-plane of three easy-to-build ansatz-independent functions is known. The procedure also constrains the ansatz to exclude unphysical results (as a negative $\vphiqm$). The method is illustrated by discussing the cases of three well-known galaxy models: the Miyamoto \& Nagai and Satoh disks, and the Binney logarithmic halo, for which the regions and the constraints on the ansatz values can be determined analytically; a two-component (Miyamoto \& Nagai plus logarithmic halo) model is also discussed.
\end{abstract}

\begin{keywords}
galaxies: kinematics and dynamics -- galaxies: structure -- galaxies: elliptical and lenticular, cD
\end{keywords}



\section{Introduction}

The Jeans equations (hereafter JEs) are one of the standard tools for the modelling of stellar systems
(e.g. \citealt{Binney2008gady.book.....B}, hereafter BT08; see also
\citealt{Ciotti2021isd..book.....C}, hereafter C21) and to extract
information from observations (e.g. \citealt{2015MNRAS.451.2723S},
\citealt{Cappellari2016ARA&A..54..597C},
\citealt{2016MNRAS.455.3680L},
\citealt{Zhu2023MNRAS.522.6326Z}). However, the JEs are moments of the more fundamental Boltzmann equation, and in the collisionless limit they suffer in general from a \lq\lq closure problem". Their closure can be obtained by assuming a (more or less motivated) dependence of the phase-space distribution function (hereafter DF) on the available integrals of motion. Alternatively, some relation between the velocity moments (usually between the velocity dispersion tensor components) can be imposed through some phenomenological \lq\lq ansatz". This latter approach is not as elegant and physically sound as deriving all the model properties from the DF, coupled with the Poisson equation and asking for self-consistency (e.g. \citealt{King1963AJ.....68Q.282K},
\citealt{1985MNRAS.217..735S}, \citealt{Bertin2008}), or using a DF built from actions (e.g. \citealt{Binney2010MNRAS.401.2318B}; see also \citealt{2015MNRAS.447.3060P}, \citealt{Binney2023MNRAS.520.1832B}), or reconstructing the DF
numerically with the Schwarzschild orbital superposition method
(\citealt{1979ApJ...232..236S}; see also, e.g., \citealt{1987ApJ...321..113S}, \citealt{Cappellari2007MNRAS.379..418C}, \citealt{2009MNRAS.393..641T}). However, the methods mentioned above are not always well suited for exploratory works, because in general the solution of the self-consistency problem requires non-trivial numerical work.

An important advantage of using the JEs is that the stellar density
distribution of the model can be chosen at the beginning to be in good
agreement with the observational data (for example by using specific
density profiles, or by multi-component modelling), and that the closure ansatz guarantees some direct control on the resulting
kinematics. Of course, the problem of the existence of a
non-negative DF for the obtained model remains in general
open.

In this work, we address two aspects worth a thorough investigation, both related to how a widely used closure ansatz (the $b$-ansatz, \citealt{Cappellari2008MNRAS.390...71C}) affects the solutions of the JEs: the first is to formalize a
general procedure, to be applied before solving the equations, to determine the constraints on this ansatz, as for example those coming from the request of positivity of $\vphiqm$, thus avoiding repeated and time-consuming numerical tests. The second is to
gain some qualitative understanding of its effects on the kinematical fields, before their construction. Moreover, as a widely used decomposition of $\vphiqm$ (e.g. that proposed by \citealt{1980PASJ...32...41S}, and its variants) requires the positivity of certain functions, we also determine the conditions for its applicability.

The paper is organized as follows: in Sections
\ref{sec:two_int_models} and \ref{sec:gen_sys}, after reviewing the
JEs for two-integral axisymmetric systems, the general solutions with
the $b$-ansatz, relating the vertical ($\sigma_z^2$) and the radial
($\sigma_R^2$) velocity dispersions, are derived, and cast in a form
suitable for the successive investigation. In Section
\ref{sec:phys_cons_bans}, we determine the constraints on the ansatz
to assure the positivity of $\vphiqm$ and to use the Satoh
decomposition and its generalizations. In Sections
\ref{sec:one_comp_mods} and \ref{sec:mnb_model}, we apply the
procedure to some well-known galaxy models, for which a fully
analytical treatment is possible: the Miyamoto \& Nagai and Satoh
disks, and the Binney logarithmic halo. A more realistic two-component
model, made by a stellar Miyamoto \& Nagai disk embedded in a dark
matter Binney halo, is also discussed. In Section \ref{sec:disc_conc},
the main results are summarized.

\section{Two-integral systems}\label{sec:two_int_models}

As well known, the JEs for an axisymmetric stellar system described by a two-integral DF $f=f(E, J_z)$, where $E$ and $J_z$ are respectively the orbital energy and the axial angular momentum, are

\begin{equation}
\begin{dcases}
    \frac{\partial\rhos\sigma_z^2}{\partial z}=-\rhos\frac{\partial\Phi}{\partial z}, \\\\
    \frac{\partial\rhos\sigma_z^2}{\partial R}-\frac{\rhos\Delta}{R}=-\rhos\frac{\partial\Phi}{\partial R},\quad \Delta \equiv \overline{v_{\varphi}^2}-\sigma_z^2,
\end{dcases}
\label{eq:2int_jeans}
\end{equation}

\noindent
(see e.g. BT08; C21): $\rhos(R,z)$ and $\Phi(R,z)$ are the stellar density distribution and the total (e.g., stars, plus dark matter) gravitational potential. Standard cylindrical coordinates are used, $\sigma_z^2$ is the vertical velocity dispersion, and a bar over a symbol indicates the average over velocity in phase-space. In particular, $\vphiqm = \vphim^2 + \sigma_{\varphi}^2$, where $\vphim$ is the streaming velocity field in the azimuthal direction, and $\sigma_{\varphi}^2$ is the azimuthal velocity dispersion. Finally, the only non-vanishing ordered velocity field is $\vphim$, while $\vrm = \vzm = 0$; moreover, $\sigma_R^2 = \sigma_z^2$, and all the mixed components of the velocity dispersion tensor vanish.

The solutions of equation (\ref{eq:2int_jeans}) with null boundary conditions at infinity are

\begin{equation}
\begin{dcases}
    \rhos\sigma_z^2 = \int_z^\infty \rhos\frac{\partial\Phi}{\partial z'}dz',\\\\
    \frac{\rhos\Delta}{R} = \frac{\partial\rhos\sigma_z^2}{\partial R}+\rhos\frac{\partial\Phi}{\partial R} = \left[\rhos,\Phi\right],
\end{dcases}
\label{eq:2int_JE_sol}
\end{equation}

\noindent
where

\begin{equation}
    \left[\rhos,\Phi\right] \equiv \int_z^{\infty}\!\left(\frac{\partial\rhos}{\partial R}\frac{\partial\Phi}{\partial z'}-\frac{\partial\rhos}{\partial z'}\frac{\partial\Phi}{\partial R}\right)\!dz',
    \label{eq:jeans_rad_comm}
\end{equation}

\noindent
is a commutator-like operator (see e.g. \citealt{Barnabe2006}; see also C21 and references therein). From the second of equation (\ref{eq:2int_JE_sol}), $\Delta$ can be obtained or by differentiation of $\rhos \sigma_z^2$, or by integration of the commutator, as in equation (\ref{eq:jeans_rad_comm}). This latter approach is to be preferred, as it usually reveals important properties of the solutions that are not apparent in the first approach based on differentiation: for example, it is immediate to show that the commutator vanishes for a spherical density $\rhos(r)$ in a spherical total potential $\Phi(r)$ and that, in the case of ellipsoidal densities in ellipsoidal potentials, the commutator is everywhere positive (negative) when the density shape is flatter (rounder) than the potential (see e.g. C21). Of course, when using the commutator for the computation of $\Delta$, the radial derivative of $\rhos \sigma_z^2$ is obtained as a byproduct:

\begin{equation}
    D \equiv \frac{\partial\rhos\sigma_z^2}{\partial R} = \left[\rhos,\Phi\right] - \rhos\frac{\partial \Phi}{\partial R},
    \label{eq:jeans_identity}
\end{equation}

\noindent
an identity we will use in the following. Once $\sigma_z^2$ and $\Delta$ are known, one has 

\begin{equation}
    \vphiqm = \Delta + \sigma_z^2 = \frac{\left[R \rhos, \Phi \right]}{\rhos},
    \label{eq:vphiqm_2int_comm}
\end{equation}

\noindent
where the second identity involving again a commutator can be easily proved from equations (\ref{eq:2int_JE_sol}) and (\ref{eq:jeans_rad_comm}).

Notice that a model with $\vphiqm < 0$ somewhere is certainly physically inconsistent, but $\vphiqm \ge 0$ everywhere is not a sufficient condition for consistency: as well known, there are models with \lq\lq acceptable" solutions of the JEs and a negative (unphysical) DF (see e.g. \citealt{Ciotti1992MNRAS.255..561C}; see also Chapter 14 in C21, and references therein).

\subsection{The Satoh $k$-decomposition}\label{sec:satoh_dec}
 
For axisymmetric models with $\Delta \ge 0$ everywhere, \citet{1980PASJ...32...41S} introduced the widely used \textit{$k$-decomposition} of $\vphiqm$:

\begin{equation}
     \vphim = k \sqrt{\Delta}, \;\; \sigma_{\varphi}^2 = \sigma_z^2 + (1-k^2) \Delta,
     \label{eq:satoh_dec}
\end{equation}

\noindent
where $k$ is constant with $0 \le k^2 \le 1$. If $k=0$, then $\vphim = 0$ and no net rotation is present, while, if $k^2 = 1$, then $\sigma_{\varphi}^2 = \sigma_z^2$ and the system is an isotropic rotator. Moreover, if $\Delta = 0$ (as for spherical models), then no ordered rotation is possible, and the system is isotropic independently of $k$. 

As shown in \citet{Ciotti1996MNRAS.279..240C}, the original Satoh decomposition can be easily generalized to assume a spatially dependent $k(R,z)$, provided that

\begin{equation}
    k^2(R,z) \le k_\text{M}^2(R,z) \equiv \frac{\vphiqm}{\Delta},
    \label{eq:satoh_dec_km}
\end{equation}

\noindent
where the upper limit $k_\text{M}^2$ corresponds to maximally rotating models with no net velocity dispersion in the azimuthal direction, and is then obtained from equation (\ref{eq:satoh_dec}) imposing $\sigma_{\varphi}^2 = 0$ everywhere; in this case, the density flattening is fully supported by the streaming velocity field $\vphim$. Of course, $k$ is not required to be positive, thus allowing the modelling of counter-rotating structures with negative $\vphim$ (see e.g. \citealt{Negri2013MmSAI..84..762N}).

In Appendix \ref{sec:lambda_dec}, we discuss the most general decomposition for $\vphiqm$, which holds also for models with $\Delta < 0$.

\section{More general systems}\label{sec:gen_sys}

Having assessed the classical case of two-integral axisymmetric systems, we now turn to the focus of the paper, i.e. the study of the properties of the kinematical fields 
associated with more general ansatz distinguishing between $\sigma_z^2$ and $\sigma_R^2$, and so implicitly based on a DF with a third integral in addition to $E$ 
and $J_z$. If the third integral is an even function of $v_z$ and $v_R$, the first of equation (\ref{eq:2int_jeans}) remains unchanged, while the second becomes 
(\citealt{Cappellari2008MNRAS.390...71C}, C21):

 \begin{equation}
        \frac{\partial\rhos\sigma_R^2}{\partial R}-\frac{\rhos\Deltac}{R}=-\rhos\frac{\partial\Phi}{\partial R}, \quad \Deltac \equiv \vphiqm - \sigma_R^2.
 \label{eq:jeans_rad_anis}
 \end{equation}

\noindent
Notice that, in addition to $\sigma_z^2$, all quantities depending on $\rhos$ and $\Phi$, such as the commutator $\left[ \rhos, \Phi \right]$ and the function $D$ in equation (\ref{eq:jeans_identity}), remain the same as in the two-integral case.

Different ansatz can be introduced to solve equation
(\ref{eq:jeans_rad_anis}), which is independent of the vertical
velocity dispersion, and contains the two unknown functions $\vphiqm$
and $\sigma_R^2$. In the following, we study in detail the ``$b$-\textit{ansatz}", relating $\sigma_R^2$ with $\sigma_z^2$.
Introduced by \citet{Cappellari2008MNRAS.390...71C}, it was
adopted for the Jeans Anisotropic Modelling method (JAM; see also  
\citealt{Cappellari2020MNRAS.494.4819C}), that is 
widely used to reproduce the properties of observed galaxies (e.g., \citealt{Cappellari2013MNRAS.432.1862C}, \citealt{2016MNRAS.462.4001Z}, \citealt{2020MNRAS.496.1857L}, \citealt{2021ApJ...916..112N}, \citealt{2024ApJ...960..110S}). This solution of the JEs, implying the alignment of the velocity ellipsoid with the cylindrical coordinates, was presented as an efficient modelling able to capture the main properties of the velocity ellipsoid inferred from extensive three-integrals Schwarzschild's modelling of integral-field stellar kinematics, under the mass-follows-light hypothesis (\citealt{Cappellari2008MNRAS.390...71C}). The main motivation for the adoption of the \lq\lq$b$-\textit{ansatz}", in fact, is that it allows to model adequately the observations, in particular the integral-field spectroscopy of axisymmetric galaxies classified as regular rotators and stellar disks (see Sections 2.3 and 2.4 of \citealt{Cappellari2008MNRAS.390...71C}; \citealt{Cappellari2016ARA&A..54..597C}). Of course, on the
theoretical side, the $b$-ansatz is not the only one possible, and in Appendix \ref{sec:mu_ans} we present the \lq\lq $\mu$-\textit{ansatz}", leading to a nice closure of equation (\ref{eq:jeans_rad_anis}).

\subsection{The b-ansatz}\label{sec:b_ans}

In the \lq\lq\textit{$b$-ansatz}", the unknown $\sigma_R^2$ is linked to $\sigma_z^2$ through the choice of the function $b(R,z) \ge 0$, as

\begin{equation}
    \sigma_R^2 = b(R,z) \sigma_z^2.
    \label{eq:b_ans_def}
\end{equation}

\noindent
When $b=0$, the system has no radial velocity dispersion, while $b=1$ gives the two-integral case.

Inserting equation (\ref{eq:b_ans_def}) in equation (\ref{eq:jeans_rad_anis}), recalling the definition of $D$ in equation (\ref{eq:jeans_identity}), and solving for $\Deltac$, we obtain:

\begin{equation}
\begin{split}
\frac{\rhos\Deltac}{R} -\rhos\sigma_z^2\frac{\partial b}{\partial R} =\, &bD + \rhos\frac{\partial\Phi}{\partial R} =\\
&b\left[\rhos,\Phi\right] + (1-b)\rhos\frac{\partial\Phi}{\partial R},
\end{split}
\label{eq:b_ans_delta}
\end{equation}

\noindent
where in the last expression we have used again equation (\ref{eq:jeans_identity}). Note how $b$ multiplies functions that are \textit{independent} of the adopted ansatz.

Once $\Deltac$ is known, if we restrict to a $b$ function that depends only on $z$ (or to a constant $b$, a special case commonly used), $\partial b/\partial R = 0$ on the l.h.s. of equation (\ref{eq:b_ans_delta}), and

\begin{equation}
\begin{split}
\vphiqm = \Deltac + b \sigma_z^2 =\, &bB + R\frac{\partial\Phi}{\partial R} =\\
&b \frac{\left[ R \rhos, \Phi \right]}{\rhos} + (1-b)R\frac{\partial\Phi}{\partial R},
\end{split}
\label{eq:vphi2m_anis}
\end{equation}

\noindent
where from equation (\ref{eq:jeans_identity})

\begin{equation}
    B \equiv \frac{R\,D}{\rhos} + \sigma_z^2 = \frac{\left[ R \rhos, \Phi \right]}{\rhos} - R \frac{\partial\Phi}{\partial R}.
    \label{eq:b_def}
\end{equation}

\noindent
Only $b(z)$ functions leading to $\vphiqm \ge 0$ everywhere are physically acceptable, and Section \ref{sec:cons_b_ans} deals with this request.

Finally, from equation (\ref{eq:vphi2m_anis}), we obtain:

\begin{equation}
    \Delta = \vphiqm - \sigma_z^2 = \Deltac + (b-1)\sigma_z^2 = bB + C,
    \label{eq:delta_bans}
\end{equation}

\noindent
where from equations (\ref{eq:jeans_identity}) and (\ref{eq:b_def})

\begin{equation}
    C \equiv R\frac{\partial\Phi}{\partial R} - \sigma_z^2 = \frac{R \left[ \rhos,\Phi \right]}{\rhos} - B.
    \label{eq:c_function_def}
\end{equation}

\noindent
From equation (\ref{eq:delta_bans}), if $\Deltac \ge 0$ and $b(z) \ge 1$, then $\Delta \ge 0$, and if $\Deltac < 0$ and $b(z) \le 1$, then $\Delta < 0$. A complete discussion on the conditions for the positivity of $\Deltac$ and $\Delta$ is given in Sections \ref{sec:pos_delta} and \ref{sec:pos_delta_C}. Notice that the values of $\Delta$ in the $b$-ansatz are not the same of the two-integral case, because, with the introduction of the ansatz, $\sigma_z^2$ remains unaltered, but $\vphiqm$ changes. Also important, the functions $D$, $B$ and $C$ are ansatz-independent.

Models with $\Deltac \ge 0$ allow for a $k$-decomposition of $\vphiqm$ similar to that in equation (\ref{eq:satoh_dec}), i.e. (\citealt{Cappellari2008MNRAS.390...71C}):

\begin{equation}
    \vphim = k \sqrt{\Deltac}, \; \; \sigma_{\varphi}^2 = \sigma_R^2 + (1-k^2)\Deltac.
    \label{eq:satoh_dec_bans}
\end{equation}

\noindent
Whenever $k^2<1$, one has $\sigma_{\varphi}^2 > \sigma_R^2$, i.e. the orbital anisotropy is tangential; radial anisotropy ($\sigma_{\varphi}^2 < \sigma_R^2$) requires $k^2>1$. Of course, the decomposition in equation (\ref{eq:satoh_dec}) can also be applied, where now $\Delta$ is given in equation (\ref{eq:delta_bans}), provided that $\Delta \ge 0$ everywhere.

A different decomposition for the azimuthal motions can be given by choosing the parameter $\gamma$, introduced in \cite{Cappellari2007MNRAS.379..418C}:

\begin{equation}
    \gamma = 1 - \frac{\sigma_{\varphi}^2}{\sigma_R^2},\quad \vphim^2 = \Deltac + \gamma \sigma_R^2,
    \label{eq:gamma_def}
\end{equation}

\noindent
where clearly

\begin{equation}
    -\frac{\Deltac}{\sigma_R^2} \le \gamma \le 1,
    \label{eq:gamma_cons}
\end{equation}

\noindent
with the first inequality required for $\vphim^2 \ge 0$. Positive $\gamma$ values correspond to radial anisotropy, while $\gamma < 0$ gives tangential anisotropy;
in two-integral systems, $\gamma = 0$ corresponds to the isotropic case. Notice that, from equation (\ref{eq:gamma_cons}), $\gamma$ can be either positive or negative if $\Deltac > 0$; instead, if $\Deltac < 0$, $\gamma$ cannot be taken negative, i.e. the velocity dispersion tensor can only be radially anisotropic. In Section \ref{sec:disc_conc}, some findings based on the use of this decomposition to interpret recent observations (\citealt{Wang2021MNRAS.500L..27W}) are discussed in light of our analysis.

\section{Physical Constraints on the Ansatz}\label{sec:phys_cons_bans}

Equation (\ref{eq:jeans_rad_anis}) can be solved only with the introduction of some closure, as those presented in Section \ref{sec:b_ans} or Appendix \ref{sec:mu_ans}. However, arbitrary ansatz functions can lead to unphysical solutions of the JEs, such as negative values of $\vphiqm$. Therefore, it would be useful to know in advance (i.e., before solving the equations) what constraints must be imposed on the ansatz function in order to avoid unphysical solutions, and also how specific choices of the ansatz functions affect the properties of the solutions in different regions of space. This Section is dedicated to these problems.

In practice, being $\sigma_z^2 \ge 0$ independent of the ansatz, the positivity of $\sigma_R^2$ is guaranteed by equation (\ref{eq:b_ans_def}); therefore, we focus on the positivity of $\vphiqm$ only. Information on the sign of $\Delta$ and $\Deltac$ is also relevant for the modelling because, although not directly related to model consistency, their positivity is needed to apply a decomposition of $\vphiqm$ as those in equations (\ref{eq:satoh_dec}) and (\ref{eq:satoh_dec_bans}).

In the following, we restrict for simplicity to models with $\partial \Phi/\partial R \ge 0$ everywhere (the most common situation), and we limit to consider $b = b(z)$; finally, we indicate with $\mathrm{P} = (R,z)$ a generic point in the ($R,z$)-plane.

\subsection{Positivity of $\vphiqm$: the $\mathscr{B}$ region}\label{sec:cons_b_ans}

We discuss here the positivity of $\vphiqm$, a condition necessary to model consistency. Introducing the sets

\begin{equation}
    \mathscr{B}^{\pm} = \left\{(R,z): B \gtrless 0 \right\},\quad \bzero = \left\{(R,z): B = 0 \right\},
    \label{eq:b_sets_def}
\end{equation}

\noindent
equation (\ref{eq:vphi2m_anis}) shows that, independently of $b(z)$, $\vphiqm \ge 0$ over the region $\bpzero = \bplus \cup \bzero$. Instead, the behaviour of $B$ over $\bminus$ constrains the possible choices of $b(z)$, as we now describe. We indicate with $\projz(\bpm)$ the \textit{projection} of $\bpm$ on the $z$-axis, and with $\bzpm = \left\{R: B \gtrless 0, z \in \projz(\bpm) \right\}$ the \textit{radial section} at fixed $z$ of $\bpm$ (see Figure \ref{fig:set_ex} for a qualitative illustration). As we restrict to systems with a reflection symmetry about the equatorial plane, without loss of generality, in the following we limit the discussion to $z \ge 0$. From equation (\ref{eq:vphi2m_anis}), the condition $\vphiqm \ge 0$ over $\bminus$ is guaranteed provided that, at each $z \in \projz(\bminus)$,

\begin{equation}
\begin{split}
       b(z) &\le \bM(z) \equiv \min_{\bzminus} \frac{R}{\lvert B \rvert} \frac{\partial\Phi}{\partial R} =\\
       & \min_{\bzminus} \left( 1 + \frac{R\left[ \rhos, \Phi \right] + \rhos\sigma_z^2}{\rhos \lvert B \rvert} \right),\quad z \in \projz(\bminus),
       \label{eq:bmz_criterion}
\end{split}
\end{equation}

\noindent
where the last equality has been obtained using equation (\ref{eq:c_function_def}). Notice that the independence of $b$ from $R$ implies that the upper limit $\bM(z)$, determined over $\bzminus$, applies also to points in the complementary section $\bzplus$: therefore, the function $\bM(z)$ provides a constraint over the whole \textit{rectangular strip} defined by $R \ge 0$ and $z \in \projz(\bminus)$. From now on, we will refer to this region as to \lq\lq the strip $\projz(\bminus)$". In the special case of a constant $b$, the condition in equation (\ref{eq:bmz_criterion}) reduces to $b \le \bM$, where $\bM$ is the minimum of $\bM(z)$.

\begin{figure}
\centering
    \includegraphics[width = \linewidth]{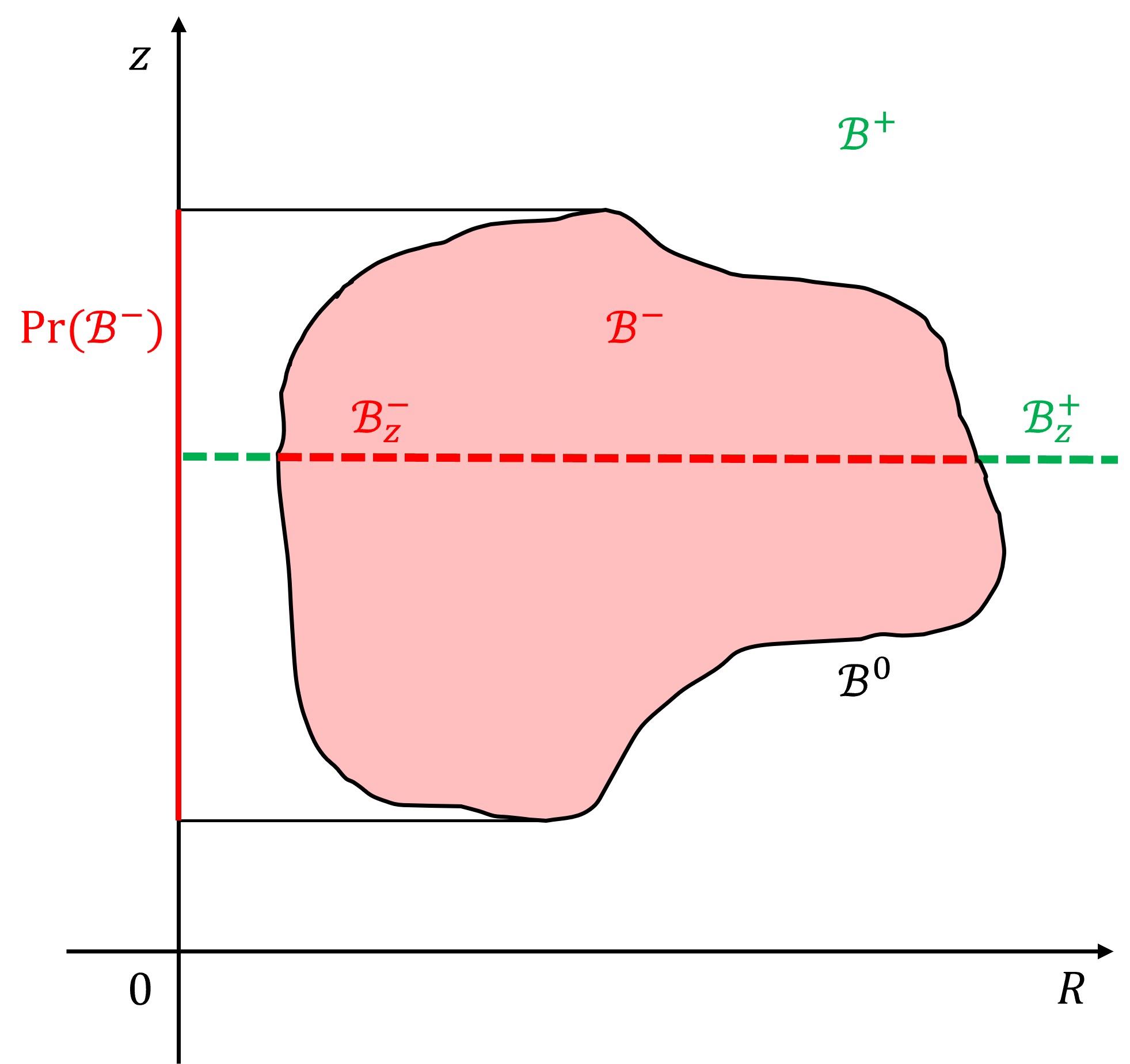}
    \caption{Cartoon of a fictitious $\bminus$ region (red), together with its projection on the $z$-axis $\projz(\bminus)$, a radial section $\bzminus$ (red dashed line), and its complement $\bzplus$ (green dashed line). From equation (\ref{eq:b_sets_def}), $\bplus \cup \bminus \cup \bzero$ fully covers the ($R,z$)-plane.}
    \label{fig:set_ex}
\end{figure}

\subsection{Positivity of $\Deltac$: the $\mathscr{D}$ region}\label{sec:pos_delta}

As seen in Section \ref{sec:b_ans}, the necessary condition to apply the $k$-decomposition to $\vphiqm$ is to have $\Deltac \ge 0$ or $\Delta \ge 0$. Introducing the sets

\begin{equation}
    \mathscr{D}^{\pm} = \left\{(R,z) : D \gtrless 0 \right\},\quad \dzero = \left\{(R,z) : D = 0 \right\},
    \label{eq:dpm}
\end{equation}

\noindent
where $D$ is defined in equation (\ref{eq:jeans_identity}), the first of equation (\ref{eq:b_ans_delta}) (where now $\partial b/\partial R = 0$) shows that, independently of $b(z)$, $\Deltac \ge 0$ over the region $\dpzero = \dplus \cup \dzero$. Instead, the condition $\Deltac \ge 0$ over $\dminus$ requires

\begin{equation}
\begin{split}
     b(z) &\le b_0(z) \equiv \min_{\dzminus}\frac{\rhos}{\lvert D \rvert}\frac{\partial\Phi}{\partial R} =\\
     &= \min_{\dzminus} \left( 1 + \frac{\left[ \rhos, \Phi \right]}{\lvert D \rvert} \right),\quad z \in \projz(\dminus),
    \label{eq:b0z_criterion}
\end{split}
\end{equation}

\noindent
where $\projz(\dpm)$ is the projection of $\dpm$ on the $z$-axis, $\dzpm = \left\{R: D \gtrless 0, z \in \projz(\dpm) \right\}$, and the last equality follows from equation (\ref{eq:jeans_identity}). Due to the independence of $b$ from $R$, the condition in equation (\ref{eq:b0z_criterion}) must be verified over the whole strip $\projz(\dminus)$. In the special case of a constant $b$, then $\Deltac \ge 0$ for $b \le b_0$, where $b_0$ is the minimum of $b_0(z)$.

Notice that from Theorem \ref{thm:0} the strip $\projz(\bminus)$ is contained in the strip $\projz(\dminus)$, and in this common region $b_0(z) \le \bM(z)$, as proved in Theorem \ref{thm:4}.

\subsection{Positivity of $\Delta$: the $\mathscr{C}$ region}\label{sec:pos_delta_C}

The determination of the positivity of $\Delta$ as a function of $b(z)$ is more complicated, depending on the sign of both the $B$ and $C$ functions in equation (\ref{eq:delta_bans}). As done above, we introduce the sets

\begin{equation}
    \mathscr{C}^{\pm} = \left\{(R,z): C \gtrless 0 \right\},\quad \czero = \left\{(R,z): C = 0 \right\},
    \label{eq:cpm}
\end{equation}

\noindent
and $\czpm = \left\{R : C \gtrless 0, z \in \projz(\cpm) \right\}$; also, $\cpzero = \cplus \cup \czero$. The sign of $\Delta$ at a point $\mathrm{P}$ depends on its position in the ($R,z$)-plane, as follows:

\begin{enumerate}
    \item $\Delta \ge 0$ independently of $b(z)$ over $\bpzero \cap \cpzero$. Moreover, $\Delta \ge 0$ over $\bplus \cap \cminus$ for

\begin{equation}
    \begin{split}
    b(z) &\ge b_1(z) \equiv \max_{\bzplus \cap \czminus} \frac{\lvert C \rvert}{B} =\\ 
    & \max_{\bzplus \cap \czminus} \left( 1 - \frac{R \left[\rhos, \Phi \right]}{\rhos B} \right),\;\; z \in \projz(\bplus \cap \cminus),\label{eq:b1_def}
\end{split}
    \end{equation}

    \noindent
    and over $\bminus \cap \cplus$ for

    \begin{equation}
        \begin{split}
    b(z) &\le b_2(z) \equiv \min_{\bzminus \cap \czplus} \frac{C}{\lvert B \rvert} =\\
    & \min_{\bzminus \cap \czplus} \left( 1 + \frac{R \left[\rhos, \Phi \right]}{\rhos \lvert B \rvert} \right),\;\; z \in \projz(\bminus \cap \cplus).\label{eq:b2_def}
\end{split}
    \end{equation}
    
    \item $\Delta < 0$ independently of $b(z)$ over $(\bminus \cap \cminus) \cup (\bzero \cap \cminus) \cup (\bminus \cap \czero)$\footnote{This region is the complementary of the region in (i), minus the regions in equations (\ref{eq:b1_def}) and (\ref{eq:b2_def}), as can be proved with the De Morgan's law.}. Moreover, $\Delta < 0$ over $\bplus \cap \cminus$ for $b(z) < b_1(z)$, and over $\bminus \cap \cplus$ for $b(z) > b_2(z)$.
\end{enumerate}

\noindent
For a spatially constant $b$, the previous inequalities hold replacing $b_1(z)$ and $b_2(z)$ with $b_1$ and $b_2$, which are respectively the maximum of $b_1(z)$, and the minimum of $b_2(z)$.

Summarizing, independently of $b(z)$, $\vphiqm \ge 0$ over $\bpzero$, $\Deltac \ge 0$ over $\dpzero$, and $\Delta \ge 0$ over $\bpzero \cap \cpzero$, and $\Delta < 0$ over $(\bminus \cap \cminus) \cup (\bzero \cap \cminus) \cup (\bminus \cap \czero)$. Instead, equations (\ref{eq:bmz_criterion}), (\ref{eq:b0z_criterion}), (\ref{eq:b1_def}) and (\ref{eq:b2_def}) set the constraints on $b(z)$ required for the positivity of $\vphiqm$ over $\bminus$, $\Deltac$ over $\dminus$, and $\Delta$ over the intersection regions $\bplus \cap \cminus$ and $\bminus \cap \cplus$.

\subsection{Some general considerations}\label{sec:gen_cons}

The properties of the kinematical fields resulting from a given choice for $b(z)$ are determined by the relations between the ansatz-independent regions $\bpm$, $\cpm$ and $\dpm$ and the functions defined over them. Each region is made by two disjoint parts fully covering the ($R,z$)-plane, thus the determination of just $\bminus$, $\cminus$ and $\dminus$ suffices for investigating a given model. In Appendix \ref{app:tech_res}, some general results about the relative positions of the regions and the constraints on $b(z)$ are presented. In particular, it is shown that $\bminus$ is always contained in $\dminus$. Moreover, if $\left[\rhos,\Phi\right] > 0$ over the whole ($R,z$)-plane (as for oblate self-gravitating models), then $\bminus \subseteq \cplus$ and $\cminus \subseteq \bplus$; if $\left[\rhos,\Phi\right] = 0$ everywhere (as in spherically symmetric models), then $\bminus = \cplus$, $\bplus = \cminus$, and $\dmzero$ coincides with the whole ($R,z$)-plane; if $\left[\rhos,\Phi\right] < 0$ (as for prolate self-gravitating models), then $\bplus \subseteq \cminus$, $\cplus \subseteq \bminus$, and $\dminus$ coincides with the whole ($R,z$)-plane. Finally, from Theorems \ref{thm:b1} and \ref{thm:b0}, it follows that for models with $\left[ \rhos, \Phi \right] \ge 0$ everywhere (a quite common situation), one has:

\begin{equation}
\begin{dcases}
    b_1(z) \le 1,\quad \forall z \in \projz(\cminus),\\\\
    1 \le b_0(z) \le b_2(z) \le \bM(z),\quad \forall z \in \projz(\bminus).
\end{dcases}
\label{eq:bz_hierarchy}
\end{equation}

\noindent
In this case, from the considerations above, $\bminus$ and $\cminus$ are disjoint, but this does not mean that the two projections in equation (\ref{eq:bz_hierarchy}) are disjoint too: actually, they may coincide with the $z$-axis. In the special case of $\left[ \rhos, \Phi \right] = 0$ everywhere, in equation (\ref{eq:bz_hierarchy}) we would have $b_1 = b_0 = b_2 = 1$.

We conclude with an example of the qualitative effects of an increase of $b$ (for simplicity assumed as spatially constant) on $\sigma_R^2$, $\vphiqm$, $\Deltac$ and $\Delta$, in the various regions. Suppose a given model is assigned, and the regions $\bpm$, $\cpm$ and $\dpm$ have been determined; furthermore, suppose $\left[ \rhos, \Phi \right] \ge 0$ everywhere, so that equation (\ref{eq:bz_hierarchy}) holds. We start considering the case of $b = 0$ (i.e. $\sigma_R^2 = 0$ everywhere): as $0$ is smaller than $b_0$ and $\bM$, then $\vphiqm$ and $\Deltac$ are positive everywhere, while $\Delta < 0$ somewhere in $\cminus$ (because $0 < b_1)$, and $\Delta \ge 0$ everywhere in $\cplus$ (because $0 < b_2$). For increasing $b$, $\sigma_R^2$ increases everywhere, $\vphiqm$ decreases over $\bminus$ and increases over $\bplus$, while $\Deltac$ decreases over $\dminus$ and increases over $\dplus$. As $\bminus \subseteq \dminus$, over $\bminus$ one has that $\Deltac$ decreases because $\vphiqm$ decreases and $\sigma_R^2$ increases; over the remaining part of $\dminus$ (i.e. $\bplus \cap \dminus$), instead, $\Deltac$ decreases, and so $\sigma_R^2$ increases more than $\vphiqm$. Increasing further $b$ above $b_1$, $\Delta$ becomes positive over $\cminus$, then when $b > b_0$, $\Deltac$ becomes negative somewhere over $\dminus$, and for $b> b_2$, $\Delta$ becomes negative somewhere over $\bminus$; finally, for $b > \bM$, $\vphiqm < 0$ somewhere in $\bminus$, and the model becomes unphysical. These behaviours are summarized in the table in Figure \ref{fig:fields_table}.

\begin{figure}
\centering
    \includegraphics[width = \linewidth]{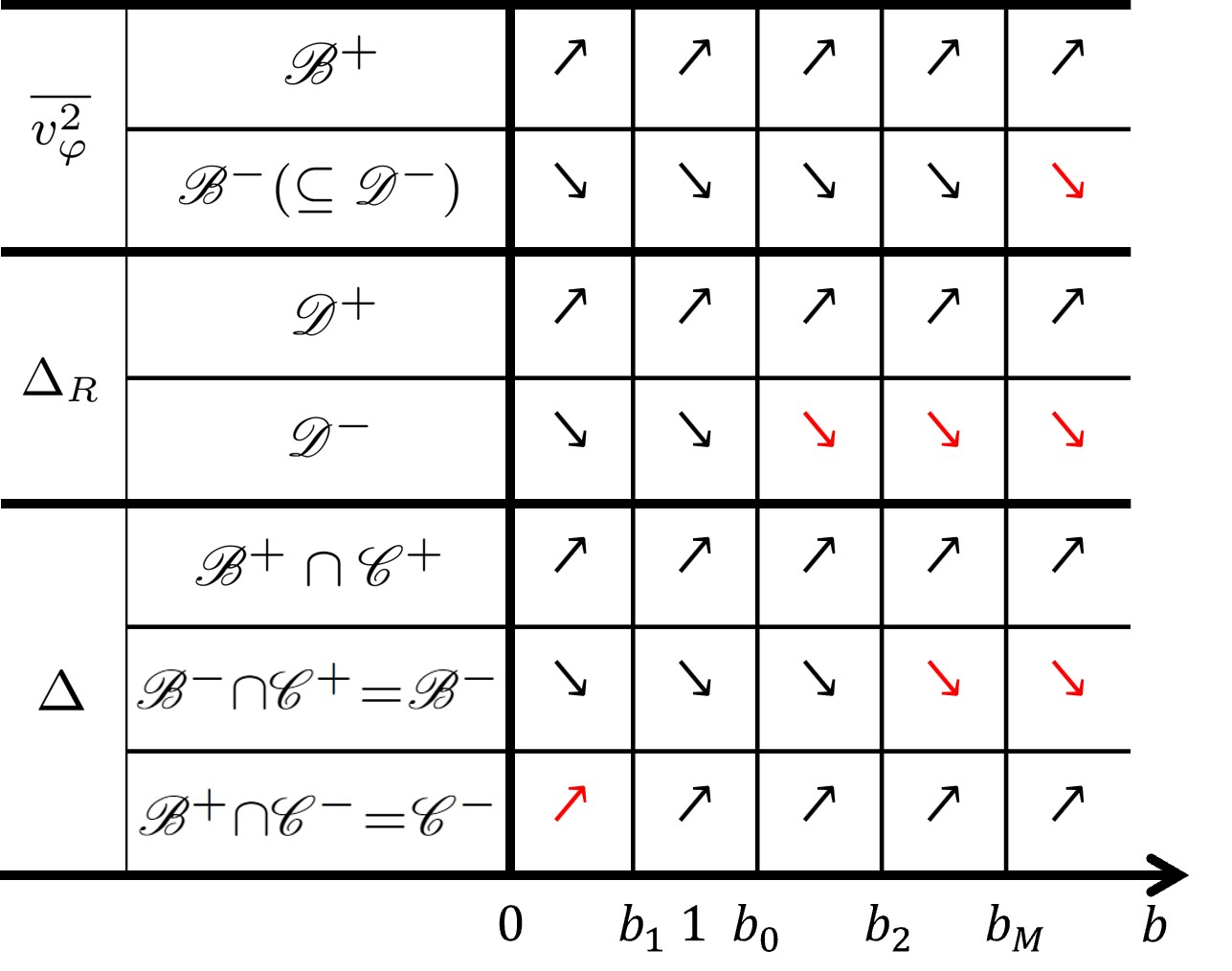}
    \caption{Scheme representing the behaviour, as a function of $b$, of the kinematical fields $\vphiqm$, $\Deltac$ and $\Delta$ over the regions $\bpm$, $\cpm$ and $\dpm$, for a model with $\left[\rho,\Phi\right] > 0$. Each row refers to the indicated region; arrows point upwards if the field is increasing, and downwards if the field is decreasing for increasing $b$; black arrows indicate a positive value, and red arrows a negative one. For $b > \bM$, the model is unphysical.}
    \label{fig:fields_table}
\end{figure}

\section{One-Component Models} \label{sec:one_comp_mods}

In light of the results in Section \ref{sec:phys_cons_bans}, we now explore the behaviour of some well-known galaxy models that allow for an almost complete analytical treatment, i.e. the \citet{1975PASJ...27..533M} (hereafter MN) disk, the \citet{1980PASJ...32...41S} disk, and the Binney logarithmic halo (BT08). In Section \ref{sec:mnb_model}, we then consider the case of the two-component model made by a stellar MN disk embedded in a dark matter Binney logarithmic halo.

\subsection{The Miyamoto \& Nagai disk}\label{sec:MN_model}

The potential-density pair of the MN disk of total mass $M_*$, and scale lengths $a_*$ and $b_*$, can be written as:

\begin{equation}
\begin{dcases}
    \phis = -\frac{GM_*}{b_* \xi}, \quad \xi \equiv \sqrt{R^2 + (s+\zeta)^2},\\
    \rhos = \frac{M_*}{b_*^3}\frac{s \xi^2 + 3\zeta (s+\zeta)^2}{4\pi\zeta^3 \xi^5}, 
\label{eq:MN_pd_pair}
\end{dcases}
\end{equation}

\noindent
where $\zeta \equiv \sqrt{1+z^2}$, and $s \equiv a_*/b_*$ measures the flattening of the disc. For $a_* = 0$, the MN model reduces to the \citet{1911MNRAS..71..460P} sphere, and for $b_* = 0$ to the razor-thin Kuzmin disc (\citealt{kuzmin1956model}, \citealt{toomre1963ApJ...138..385T}). In the formulae above and in this Section, $R$ and $z$ are assumed to be normalized to $b_* \ne 0$.

From equations (\ref{eq:2int_JE_sol}) and (\ref{eq:jeans_rad_comm}), we have:

\begin{equation}
    \rhos\sigma_z^2 = \frac{G M_*^2}{b_*^4}\frac{(s+\zeta)^2}{8\pi \zeta^2 \xi^6},\quad \left[\rhos,\Phi_*\right] = \frac{G M_*^2}{b_*^5}\frac{sR}{4\pi \zeta^3 \xi^6}.
    \label{eq:delta_MN}
\end{equation}

\noindent
In particular, $\left[\rho_*,\Phi_*\right] \ge 0$ everywhere, vanishing in the spherical case ($s=0$): therefore, equation (\ref{eq:bz_hierarchy}) and the considerations made at the end of Section \ref{sec:phys_cons_bans} apply. For reference, in the top panels of Figure \ref{fig:2int_1comp_panel}, we show the 2D maps of the ansatz-independent fields $\sigma_z^2$, $R\left[\rho_*,\Phi_*\right]/\rhos$ (i.e. the $\Delta$ field in the two-integral case), and $R \partial\Phi/\partial R$, for a disc flattening of $s = 1$.

\begin{figure*}
    \centering
    \includegraphics[width=\textwidth]{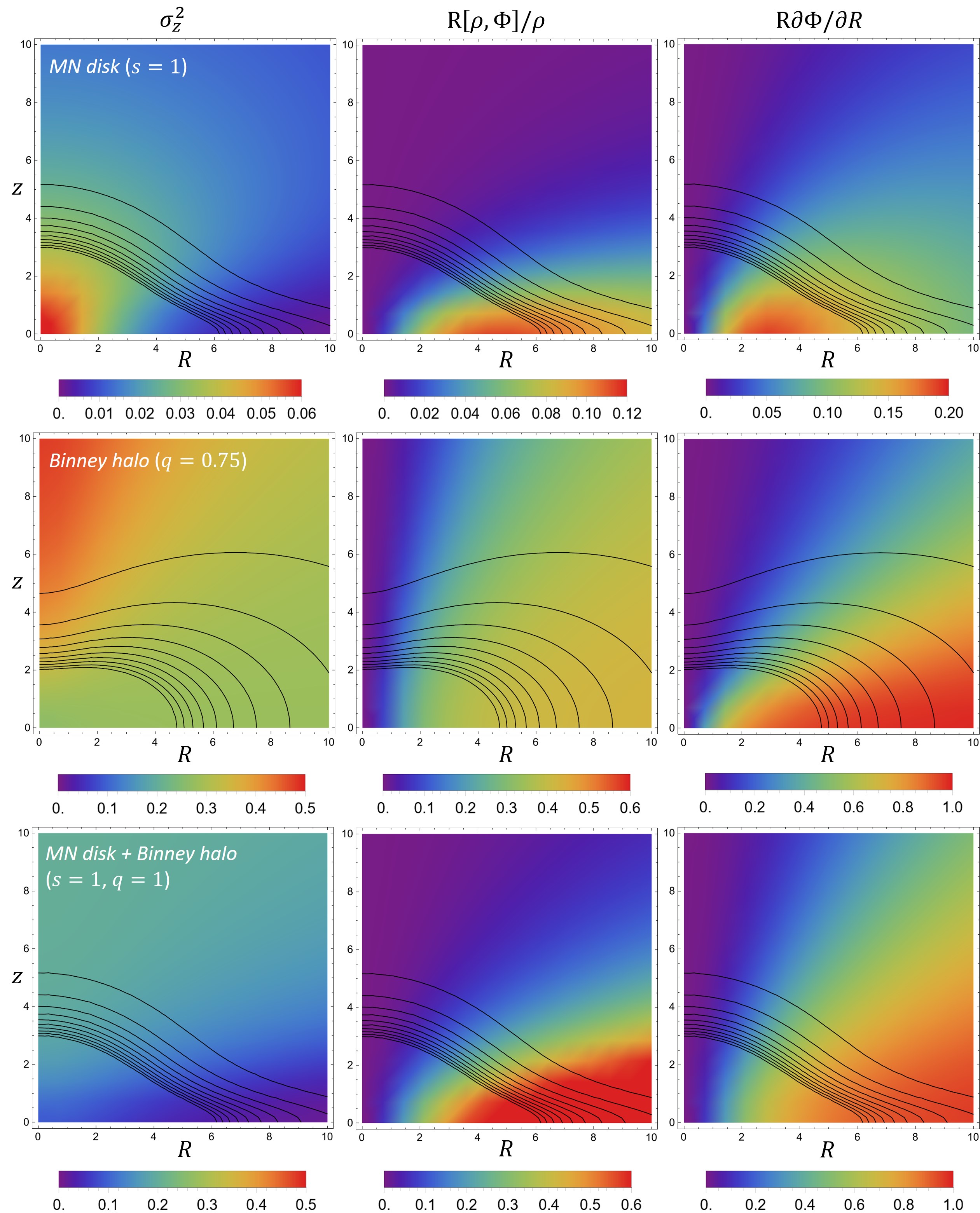}
    \caption{Maps in the ($R,z$)-plane of $\sigma_z^2$ (left column), $R\left[\rho,\Phi\right]/\rho$ (central column), and $R\partial\Phi/\partial R$ (right column) for the MN disk with $s=1$ (top row), for the Binney logarithmic halo with $q=0.75$ (central row), and for the MN stellar disk ($s = 1$) in a dominant ($\Phi = \phih$) and spherical ($q = 1$) Binney logarithmic dark matter halo with $\bh = 2$. The fields are in units of $\vh^2$ for the Binney halo, and of $GM_*/b_*$ for the other two models. Black solid lines show equally spaced isodensity contours.}
    \label{fig:2int_1comp_panel}
\end{figure*}

The first task to explore the model behaviour is the identification of the $\mathscr{B}$, $\mathscr{C}$ and $\mathscr{D}$ regions. Quite remarkably, equation (\ref{eq:delta_MN}) allows for an analytical expression of the $B$, $C$ and $D$ functions:

\begin{equation}
    \bminus = \left\{(R,z) : \sqrt{5}R - (s + \zeta) > 0 \right\},
    \label{eq:mn_b_exp}
\end{equation}

\noindent
i.e. $\bminus$ is the portion of the ($R,z$)-plane at the right of the hyperbola $\bzero$ with vertex $R=(1+s)/\sqrt{5}$ on the equatorial plane, and asymptotes $z =\nobreak \pm\sqrt{5} R$, so that $\projz(\bminus)$ coincides with the $z$-axis. In the top panels of Figure \ref{fig:MN_B_Reg}, the red line shows $\bzero$ for three representative values of $s$. Moreover, the radial derivative of the first of equation (\ref{eq:delta_MN}) shows that $D < 0$ over the whole space, and $D = 0$ for $R = 0$, i.e. $\dzero$ coincides with the $z$-axis; therefore, $\projz(\dminus)$ coincides with the $z$-axis, and $\dplus$ is empty. Finally, from the second of equation (\ref{eq:delta_bans}), we get

\begin{equation}
\begin{split}
  \cminus = \Big\{(R,z) :\, &2 s R^4 + (s + \zeta)^2 (2 s + 5 \zeta) R^2 -\\
  &(s+\zeta)^4 \zeta < 0\Big\}.
\end{split}
\label{eq:cmzero_MN}
\end{equation}

\noindent
The biquadratic in $R$ above has a positive discriminant, and a permanence and a variation of signs in the coefficients: it follows that $\cminus$ is the portion of the ($R,z$)-plane contained between the $z$-axis and the (positive) square root of the (positive) solution of the biquadratic. We do not report here the expression for $\czero$, which however can be determined without difficulty; $\projz(\cminus)$ coincides with the $z$-axis. In the top panels of Figure \ref{fig:MN_B_Reg}, the green line shows $\czero$ for three representative values of $s$. Again from Figure \ref{fig:MN_B_Reg}, it is apparent that $\bminus \subseteq \cplus$ and $\cminus \subseteq \bplus$, as expected from the general discussion in Section \ref{sec:phys_cons_bans}; notice also how the separation between $\bminus$ and $\cminus$ (i.e. the region $\bplus \cap \cplus$) becomes larger as $s$ increases. Moreover, in the spherical case ($s = 0$) we would obtain $\bzero = \czero$, $\bminus = \cplus$ and $\bplus = \cminus$.

\begin{figure*}
\centering
        \includegraphics[width=\linewidth]{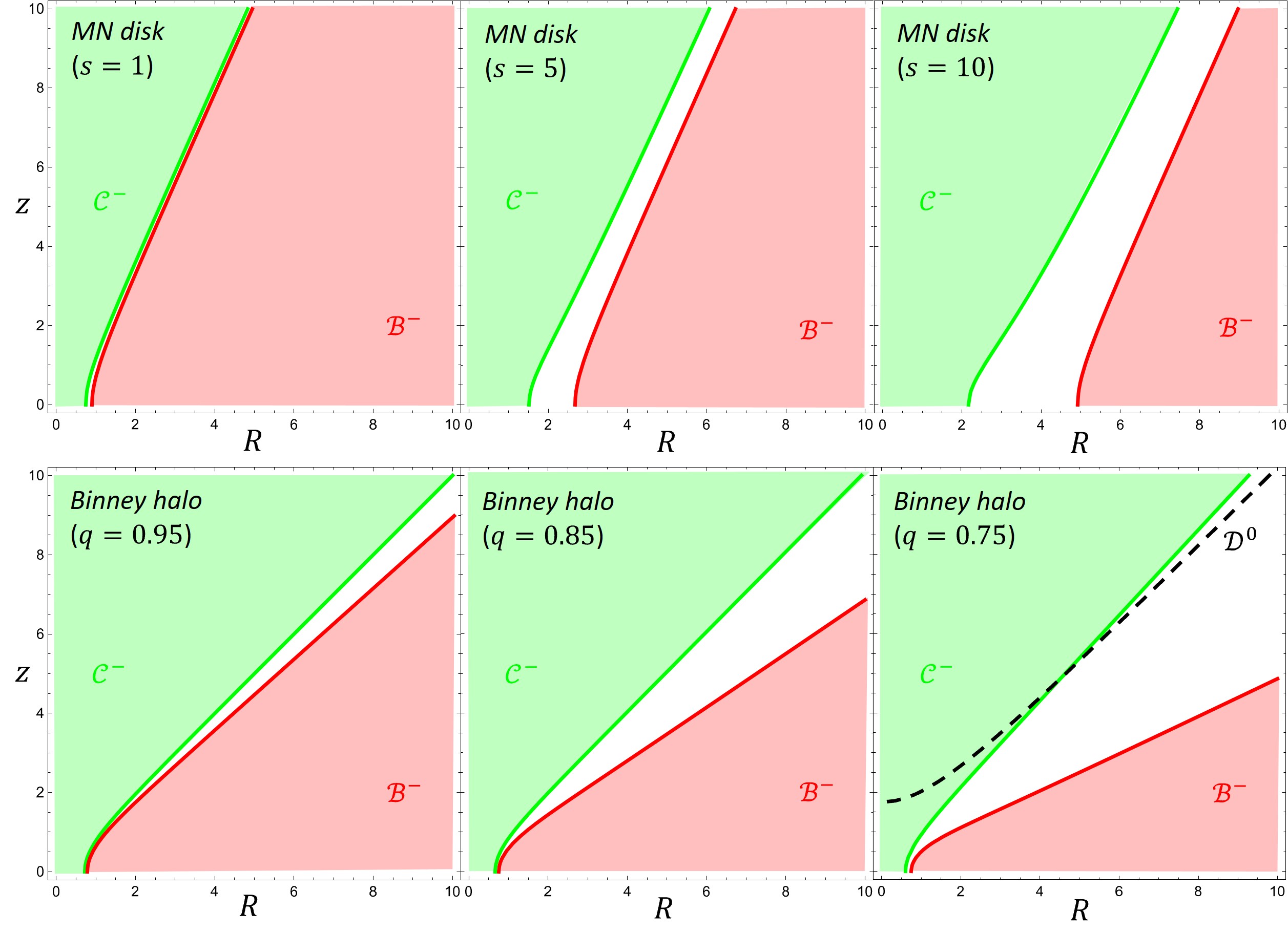}
        \caption{The ansatz-independent regions $\mathscr{B}$, $\mathscr{C}$, and $\mathscr{D}$ of the MN disk (top panels) for different values of $s$, and of the Binney logarithmic halo (bottom panels) for different values of $q$; both models become flatter from left to right. Red and green areas correspond to the $\bminus$ and $\cminus$ regions, respectively, while heavy red and green lines are the $\bzero$ and $\czero$ sets. Except for the Binney model with $q = 0.75$, where $\dminus$ lies below the $\dzero$ line (dashed), for all the other cases $\dminus$ fully covers the ($R,z$)-plane, with $\dzero$ coinciding with the $z$-axis. For all models, $\projz(\bminus)$ and $\projz(\cminus)$ coincide with the $z$-axis, and as $\left[\rho,\Phi\right] \ge 0$ everywhere, from Theorem \ref{thm:2} we have $\bminus \cap \cplus = \bminus$, and $\cminus \cap \bplus = \cminus$.}
\label{fig:MN_B_Reg}
\end{figure*}

We now determine the constraints on $b(z)$ for the positivity of $\vphiqm$, $\Deltac$ and $\Delta$. Remarkably, all the computations in equations (\ref{eq:bmz_criterion}), (\ref{eq:b0z_criterion}), (\ref{eq:b1_def}) and (\ref{eq:b2_def}) can be carried out explicitly, and we obtain

\begin{equation}
    \bM(z) = \frac{6}{5} + \frac{14 s + 4\sqrt{s\left(6s+15\zeta\right)}}{25\zeta},
    \label{eq:bmz_MN_exp2}
\end{equation}

\begin{equation}
    b_0(z) = 1 + \frac{s}{3\zeta},
    \label{eq:b0_MN}
\end{equation}

\begin{equation}
    b_1(z) = 1,\quad b_2(z) = 1 + \frac{(14+4\sqrt{6})s}{25\zeta}.
\end{equation}

\noindent
In the top-left panel of Figure \ref{fig:1comp_bprof}, we show the four functions for $s = 1$, where the chain of inequalities in equation (\ref{eq:bz_hierarchy}) is apparent, considering that for the MN disk $\projz(\bminus)$ and $\projz(\cminus)$ both coincide with the $z$-axis. If one restricts to $b$ functions independent of $z$, we have

\begin{equation}
    1 = b_1 = b_0 = b_2 < \bM = \frac{6}{5},
    \label{eq:mn_const_b_hierarchy}
\end{equation}

\noindent
independent of $s$: in particular, this holds for the spherical case (the Plummer sphere).

\begin{figure*}
\centering
\includegraphics[width=\linewidth]{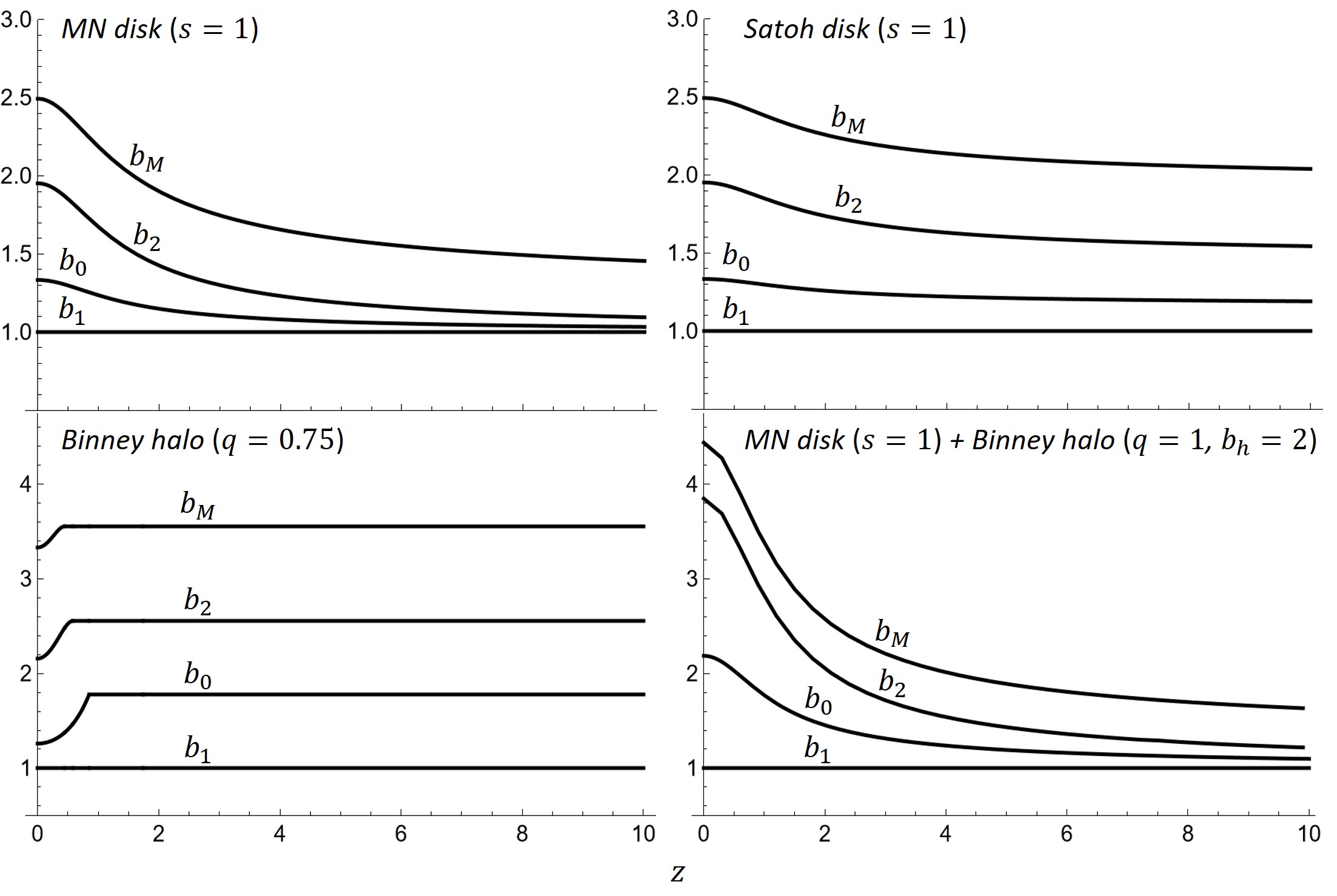}
\caption{Profiles of $b_1(z)$, $b_0(z)$, $b_2(z)$, $\bM(z)$ for the MN disk with $s=1$ (top-left panel), the Satoh disk with $s=1$ (top-right panel), the Binney logarithmic halo with $q=0.75$ (bottom-left panel), and the two-component MN disk with $s=1$ in a dominant ($\Phi = \phih$), spherical ($q=1$) Binney logarithmic halo with $\bh = 2$ (bottom-right panel). The $z$ values are normalized to $\bh$ for the Binney model, and to $b_*$ for the other models. Because for these models $\left[\rho, \Phi\right] \ge 0$ everywhere, it follows that $b_1(z) \le b_2(z) \le b_0(z) \le \bM(z)$ for all $z$, as discussed in Section \ref{sec:phys_cons_bans}.}
\label{fig:1comp_bprof}
\end{figure*}

Figure \ref{fig:MN_b_Panel} illustrates the behaviour of $\vphiqm$, $\Deltac$ and $\Delta$ for the MN disk with $s = 1$, and for three values of $b$, i.e. $b = 0.5, 1, 2$. The adopted values map three different kinematical configurations, as can be seen from equation (\ref{eq:mn_const_b_hierarchy}). For $b = 0.5$, $\vphiqm$ and $\Deltac$ are everywhere positive, while $\Delta$ is expected to be negative over some region in $\bplus \cap \cminus = \cminus$, being $0.5 < b_1$, and positive over $\bminus \cap \cplus = \bminus$, being $0.5 < b_2$. This is nicely confirmed by the three left panels in Figure \ref{fig:MN_b_Panel}, where white regions indicate negative values of the fields. The three central panels correspond to $b = 1$, i.e. to the two-integral solutions (see also the three top panels in Figure \ref{fig:2int_1comp_panel}). The fields in this case are everywhere positive, in agreement with the general discussion and equation (\ref{eq:mn_const_b_hierarchy}). Note how the increase of $b$ from $0.5$ to $1$ produces the expected changes summarized in Figure \ref{fig:fields_table}, i.e. $\vphiqm$ decreases over $\bminus$ and increases over $\bplus$, $\Deltac$ decreases everywhere being $\dminus$ for this model coincident with the ($R,z$)-plane, and $\Delta$ behaves qualitatively as $\vphiqm$, becoming non-negative over $\cminus$ because $1 = b_1$. Finally, the three right panels of Figure \ref{fig:MN_b_Panel} correspond to the unphysical model with $b > \bM$, and a large region in the $\vphiqm$ map becomes white: increasing further $b$ would increase the size of the white region up to the whole $\bminus$. However, $\vphiqm$ continues to increase in $\bplus$. As $2 > b_0$, $\Deltac$ is now negative over a large portion of the ($R,z$)-plane and, increasing further $b$, $\Deltac$ would become negative everywhere. Finally, the negative region of $\Delta$ has now switched to the $\bminus$ region, being $2 > b_2$. Interestingly, the shape and the position of the white region of $\vphiqm$ resemble those where 
the numerically evaluated $\vphiqm$ for the similar model of the Satoh disk with $b=4/3$ (\citealt{Cappellari2020MNRAS.494.4819C}, Figure 10) goes to zero.

\begin{figure*}
\centering
        \includegraphics[width=\linewidth]{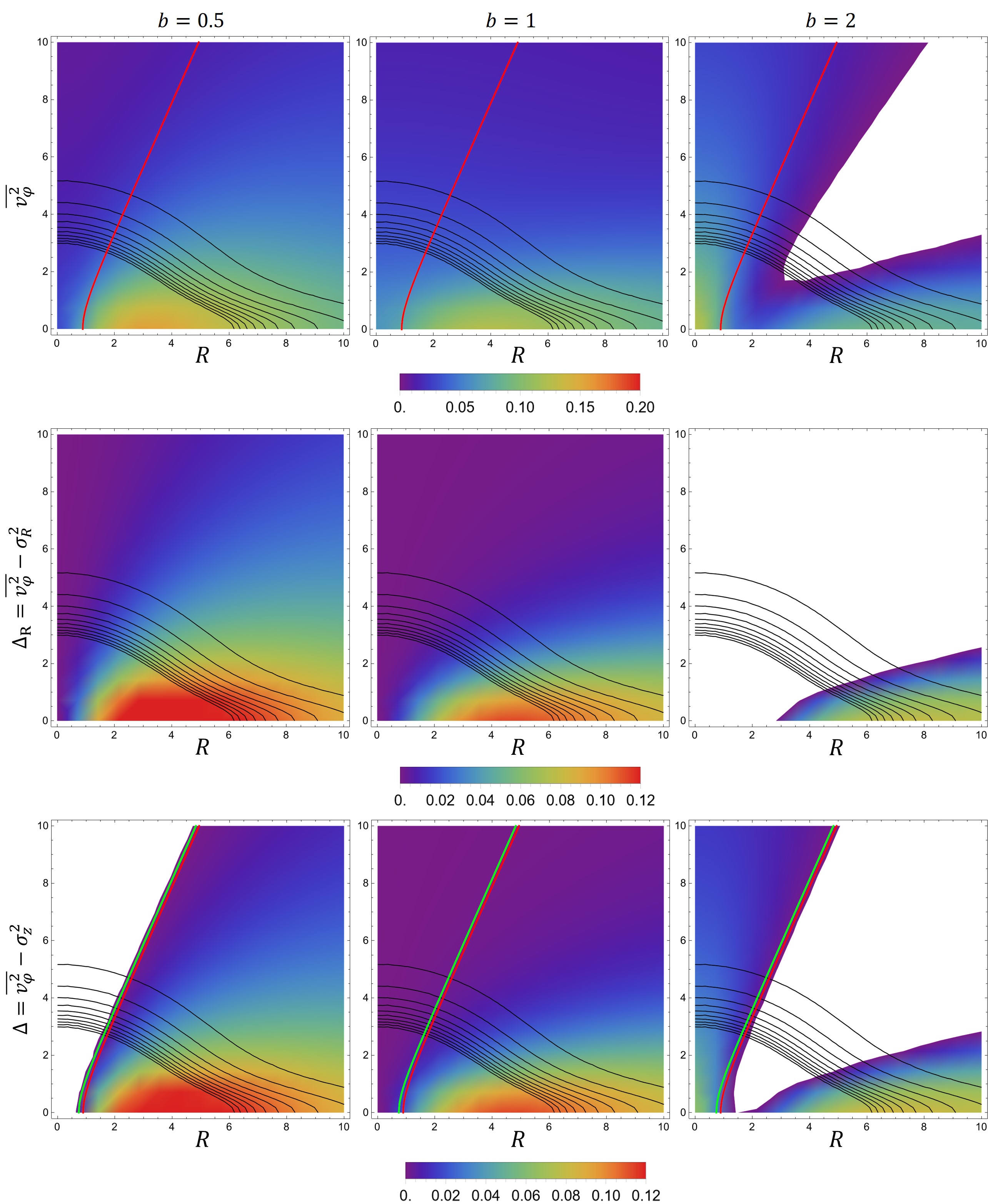}
        \caption{Maps in the ($R,z$)-plane of $\vphiqm$ (top row), $\Deltac$ (central row) and $\Delta$ (bottom row) in units of $GM_*/b_*$ for the MN disk with $s=1$ and constant $b=0.5$ (left column), $b=1$ (central column) and $b=2$ (right column). Black solid lines show equally spaced isodensity contours of the stellar distribution. The red and green lines show, respectively, the $\bzero$ and $\czero$ curves, while $\dzero$ coincides with the $z$-axis. White regions correspond to negative values of the fields.}
\label{fig:MN_b_Panel}
\end{figure*}

\subsection{The Satoh disk}

The potential-density pair of the \citet{1980PASJ...32...41S} disk of total mass $M_*$, and scale lengths $a_*$ and $b_*$, can be written as:

\begin{equation}
\begin{dcases}
    \phis = -\frac{GM_*}{b_* \xi}, \quad \xi \equiv \sqrt{R^2 + (s+\zeta)^2 - 1},\\ 
     \rhos = \frac{M_*}{b_*^3}\frac{s\xi^2 + 3s\zeta(s+2\zeta)}{4 \pi \zeta^3 \xi^5}, 
\end{dcases}
\label{eq:S_dp_pair}
\end{equation}

\noindent
where again $\zeta \equiv \sqrt{1+z^2}$, $s \equiv a_*/b_*$ measures the flattening of the disk, and all lengths are assumed to be normalized to $b_* \ne 0$ as throughout the Section. Notice that, at variance with the MN model, the spherical limit of the Satoh model ($s = 0$) reduces to the point mass case of no practical interest. From equations (\ref{eq:2int_JE_sol}) and (\ref{eq:jeans_rad_comm}), one has:

\begin{equation}
    \rhos\sigma_z^2 = \frac{G M_*^2}{b_*^4} \frac{s(s+2\zeta)}{8\pi \zeta^2 \xi^6},\quad \left[\rho_*,\Phi_*\right] = \frac{G M_*^2}{b_*^5}\frac{sR}{4\pi \zeta^3 \xi^6}.
\label{eq:2int_S_sol}
\end{equation}

\noindent
Similarly to the MN disk, $\left[\rho_*,\Phi_*\right] \ge 0$ everywhere, and also the maps of $\sigma_z^2$, $\left[\rhos,\Phi\right]$ and $R \partial\Phi/\partial R$ are very similar to those of the MN model in Figure \ref{fig:2int_1comp_panel}, and thus are not shown.

The $\mathscr{B}$, $\mathscr{C}$ and $\mathscr{D}$ regions can be determined analytically. In particular, we have

\begin{equation}
    \bminus = \left\{(R,z) : 5R^2 - (s + \zeta)^2 + 1 > 0 \right\},
    \label{eq:s_b_exp}
\end{equation}

\noindent
so that $\bminus$ is the portion of the ($R,z$)-plane to the right of the curve $\bzero$ with vertex coordinates $z = 0$ and $R = \sqrt{s(s+2)/5}$, and asymptotes $z =\nobreak \pm\sqrt{5} R$; $\projz(\bminus)$ coincides with the $z$-axis. The radial derivative of the first of equation (\ref{eq:2int_S_sol}) shows that $D < 0$ everywhere, except for the $z$-axis, where $D = 0$; therefore, as for the MN disk, $\projz(\dminus)$ coincides with the $z$-axis, and the set $\dplus$ is empty. Finally, from the second of equation (\ref{eq:delta_bans}),

\begin{equation}
    \begin{split}
        \cminus = \Big\{(R,z) :\, &2R^4 + R^2 (12 \zeta^2 + 9 s \zeta + 2 s^2 - 2)\\
        - &\left(\zeta+s+1\right)(\zeta+s-1) (s + 2\zeta) \zeta < 0 \Big\},
    \end{split}
\label{eq:cmzero_S}
\end{equation}

\noindent
and again, similarly to the MN disk, the resulting expression is a biquadratic with two real solutions. From the fact that $\zeta \ge 1$, the coefficients present a permanence and a variation of their sign: therefore, $\cminus$ is the region between the $z$-axis and the (positive) square root of the (positive) solution of the biquadratic. The explicit expression of $\czero$ can be obtained without difficulty; we just notice that, as for the MN disk, $\projz(\cminus)$ coincides with the $z$-axis. Again, all the results summarized in Section \ref{sec:phys_cons_bans} about the relative position of the regions apply, i.e. $\bminus \subseteq \dminus$, and $\bminus \cap \cplus = \bminus$, $\bplus \cap \cminus = \cminus$.

The limitations on $b(z)$ for the positivity of $\vphiqm$, $\Deltac$ and $\Delta$ can all be obtained analytically. From equations (\ref{eq:bmz_criterion}), (\ref{eq:b0z_criterion}), (\ref{eq:b1_def}) and (\ref{eq:b2_def}), 

\begin{equation}
\begin{split}
   \bM(z) = \frac{6}{5} + \frac{14\zeta_*^2  + 2\zeta_* \sqrt{24\zeta_*^2 + 60\zeta(s+2\zeta)}}{25\zeta(s+2\zeta)},
\end{split}
\end{equation}

\begin{equation}
    b_0(z) = 1 + \frac{\zeta_*^2}{3\zeta(s+2\zeta)},
    \label{eq:b0_S}
\end{equation}

\begin{equation}
    b_1(z) = 1,\quad b_2(z) = 1 + \frac{(14 + 4\sqrt{6})\zeta_*^2}{25\zeta(s+2\zeta)},
\end{equation}

\noindent
where $\zeta_* \equiv \sqrt{(s + \zeta)^2 - 1}$. In the top-right panel of Figure \ref{fig:1comp_bprof}, we show the four functions above for $s = 1$; again, they fulfil the inequalities in equation (\ref{eq:bz_hierarchy}), considering that $\projz(\bminus)$ and $\projz(\cminus)$ both coincide with the $z$-axis. For a spatially constant $b$, we have

\begin{equation}
    b_1 = 1 < b_0 = \frac{7}{6} < b_2 = \frac{32 + 2\sqrt{6}}{25} < \bM = \frac{49}{25},
\end{equation}

\noindent
independently of $s$.

We do not show the analogous of Figure \ref{fig:MN_b_Panel}, due to the close similarity, for sufficiently flat models, with the MN case, so that all comments made for the MN disk apply. The only noticeable difference is that now $b_1 \neq b_0 \neq b_2$, while for the MN disk they are the same [see equation (\ref{eq:mn_const_b_hierarchy})].

\subsection{The Binney logarithmic halo}\label{sec:B_model}

The potential-density pair of the Binney logarithmic halo (BT08) of asymptotic circular \nobreak velocity $\vh$, scale length $\bh$, and potential flattening $q$ is:

\begin{equation}
\begin{dcases}
    \phih = \frac{\vh^2}{2}\ln\left(1 + R^2 + \frac{z^2}{q^2}\right),\\ 
    \rhoh = \frac{\vh^2}{4\pi G \bh^2}\frac{1 + 2q^2 + R^2 + \left(2-q^{-2}\right)z^2}{q^2 \left(1 + R^2 + q^{-2}\,z^2\right)^2},
\end{dcases}
\label{eq:B_pd_pair}
\end{equation}



\noindent
where throughout this Section the lengths are assumed normalized to $\bh \ne 0$. As well known, for $q<1/\sqrt{2}$, $\rhoh$ becomes negative on the $z$-axis. The dynamical properties of this model are given by \citet{Evans1993MNRAS.260..191E} (see also \citealt{Evans1994MNRAS.267..333E} and  \citealt{EvansdZ1994MNRAS.271..202E} for the properties of the larger family of the so-called power-law models); here we just use equations (\ref{eq:2int_JE_sol}) and (\ref{eq:jeans_rad_comm}) to obtain

\begin{equation}
\begin{dcases}
    \rhoh\sigma_z^2 = \frac{\vh^4}{G \bh^2} \frac{q^4R^2 + (2q^2 - 1)z^2 + 2q^4}{8\pi (q^2R^2 + z^2 + q^2)^2},\\
    \left[\rhoh,\phih\right] = \frac{\vh^4}{G \bh^3} \frac{q^2 (1 - q^2) R}{4\pi (q^2 R^2 + z^2 + q^2)^2},
\end{dcases}
    \label{eq:2int_B_sol}
\end{equation}

\noindent
so that $\sigma_z^2 \ge 0$ for $q \ge 1/\sqrt{2}$. Moreover, $\left[\rhoh,\phih\right] \ge 0$ for $1/\sqrt{2} \le q < 1$, it vanishes for $q = 1$ (the spherical limit), and it is $\le 0$ for $q>1$. We finally notice that, in the two-integral case,

\begin{equation}
    \rhoh\vphiqm = \frac{\vh^4}{G \bh^2}\frac{q^2 (2 - q^2) R^2 + (2q^2 - 1) z^2 - 2q^4}{8\pi (q^2 R^2 + z^2 + q^2)^2},
    \label{eq:vphiqm_B_exp1}
\end{equation}

\noindent
showing that $\vphiqm$ becomes negative for $q>\sqrt{2}$, and the models are hence inconsistent. Actually, \citet{Evans1993MNRAS.260..191E} proved that the two-integral DF becomes negative for $q \gtrsim 1.08$ so that models with $1.08 \lesssim q \le \sqrt{2}$ do not exist\footnote{This case exemplifies that $\sigma_z^2 \ge 0$ and $\vphiqm \ge 0$ are only \textit{necessary} conditions for the model consistency, while $\sigma_z^2 < 0$ or $\vphiqm < 0$ are \textit{sufficient} to prove the model inconsistency.} although they have non-negative $\sigma_z^2$ and $\vphiqm$. From now on, we consider models with $1/\sqrt{2} \le q \le 1$, so that $\left[\rhoh,\phih\right] \ge 0$ everywhere: in the central panels of Figure \ref{fig:2int_1comp_panel}, the 2D maps of $\sigma_z^2$, $\left[\rhoh,\phih\right]$, and $R \partial\Phi/\partial R$ are shown for $q=0.75$.

As for the two previous models, the $\mathscr{B}$, $\mathscr{C}$ and $\mathscr{D}$ regions can be determined in a fully analytical way. We notice however that for $q \simeq 1$ the Binney halo is qualitatively different from a disk, resembling a spheroid; therefore, we expect some important differences compared to the previously discussed cases. From equation (\ref{eq:2int_B_sol}), we have

\begin{equation}
\begin{split}
    \bminus = \Big\{(R,z) :\, &q^6 R^4 + 3 q^2 (q^2 z^2 - z^2 + q^4) R^2 -\\
    &(q^2 + z^2) (2 q^2 z^2 - z^2 + 2 q ^4) > 0 \Big\}.
 \end{split}
 \label{eq:B_b_exp}
\end{equation}

\noindent
With some work, it can be proved that the discriminant of the biquadratic is strictly positive independently of $q$; moreover, for $q \ge 1/\sqrt{2}$, the last term in equation (\ref{eq:B_b_exp}) is negative, so that a permanence and a variation of the sign of its coefficients are present. We conclude that $\bzero$ is the (positive) square root of the (positive) solution, and $\bminus$ is the region of the ($R,z$)-plane to the right of $\bzero$. As for the previous models, $\projz(\bminus)$ coincides with the $z$-axis; for simplicity, we do not report the explicit expression of $\bzero$ here, but we show it as the red line in the bottom panels of Figure \ref{fig:MN_B_Reg}, for three representative values of $q$. The properties of the $\mathscr{D}$ region are more complicated compared to the previous cases: in fact, while for $q \ge \sqrt{2/3}$, $D < 0$ everywhere (vanishing on the $z$-axis) and thus $\dminus$ fully covers the ($R,z$)-plane, for $q < \sqrt{2/3}$

\begin{equation}
\begin{split}
    \dminus = \Bigg\{(R,z) : z - \frac{q^2 \sqrt{3 + R^2}}{\sqrt{2-3q^2}} < 0\Bigg\},
\end{split}
\end{equation}

\noindent
and the $\dplus$ region now exists: for example, in the bottom right panel of Figure \ref{fig:MN_B_Reg}, the dashed line shows $\dzero$ for $q = 0.75$, and $\dplus$ is the part of the plane above it. Notice how, independently of $q$, $\projz(\dminus)$ coincides with the $z$-axis. Finally, from equation (\ref{eq:delta_bans}),

\begin{equation}
    \begin{split}
        \cminus = \Big\{&(R,z) : q^4 (2- q^2) R^4 + q^2 (q^2 z^2 - z^2 + q^4 +\\ &2q^2) R^2
        - (q^2 + z^2) (2 q^2 z^2 - z^2 + 2 q^4) < 0 \Big\},
    \end{split}
\end{equation}

\noindent
where, in the considered range of $1/\sqrt{2} \le q \le 1$, the discriminant of the biquadratic is positive, and the coefficients present a permanence and a variation of the sign. Therefore, $\czero$ is the (positive) square root of the (positive) solution, and $\cminus$ is the region between the $z$-axis and $\czero$; $\projz(\cminus)$ coincides with the $z$-axis. In the bottom panels of Figure \ref{fig:MN_B_Reg}, the green line shows $\czero$ for three representative values of $q$. It is also apparent how $\bminus \subseteq \dminus$, $\bminus \cap \cplus = \bminus$ and $\bplus \cap \cminus = \cminus$, as expected from the general discussion; similarly to the MN model, notice that the $\bplus \cap \cplus$ region becomes larger as $q$ decreases, i.e. for more flattened systems. Again, in the spherical case ($q = 1$) we would obtain $\bzero = \czero$, $\bminus = \cplus$ and $\bplus = \cminus$.

We now determine the constraints on $b(z)$ from the request of positivity for $\vphiqm$, $\Deltac$ and $\Delta$. From equation (\ref{eq:bmz_criterion}), with $1/\sqrt{2} \le q \le 1$,

\begin{equation}
        \bM(z) = \frac{2}{q^2},\quad z \ge z_{\mathrm{M}}(q) \equiv \frac{q^2\sqrt{1-q^2}}{\sqrt{q^4 - 2q^2 + 3/2}},
        \label{eq:binney_zcM}
\end{equation}

\noindent
while, for $z \le z_{\mathrm{M}}$, $\bM(z)$ is a complicated function monotonically decreasing from $2/q^2$ to the minimum

\begin{equation}
    \bM(0) = \frac{14 + 12 q^2 + 8\sqrt{2+q^2-2q^4}}{17 q^2},
    \label{eq:bM_B_exp}
\end{equation}

\noindent
reached on the equatorial plane; in the spherical case ($q=1$), $\bM(z) = 2$ for all $z$. From equation ($\ref{eq:b0z_criterion}$),

\begin{equation}
    b_0(z) = \frac{1}{q^2},\quad z \ge z_{0}(q) \equiv \frac{q^2}{\sqrt{1-q^2}},
    \label{eq:binney_zc0}
\end{equation}

\noindent
and, for $z \le z_{0}$, $b_0(z)$ is a complicated function monotonically decreasing from $1/q^2$ to the minimum

\begin{equation}
    b_0(0) = \frac{2}{3}+\frac{1}{3q^2},
    \label{eq:B_b0C_exp}
\end{equation}

\noindent
reached again on the equatorial plane; in the spherical case, $b_0(z) = 1$ for all $z$, as expected from the general discussion. Finally, from equations (\ref{eq:b1_def}) and (\ref{eq:b2_def}), $b_1(z) = 1$, and

\begin{equation}
    b_2(z) = \frac{2}{q^2} - 1,\quad z \ge z_{2}(q) \equiv \frac{\sqrt{2}q^2}{\sqrt{3 - 2q^2}},
    \label{eq:binney_bmp_zc}
\end{equation}

\noindent
while, for $z \le z_2$, $b_2(z)$ is again a complicated function which decreases monotonically, reaching its minimum

\begin{equation}
    b_2(0) = \frac{14 + 8\sqrt{2} + (3-8\sqrt{2})q^2}{17q^2},
    \label{eq:B_b2C_exp}
\end{equation}

\noindent
on the equatorial plane; in the spherical case, $b_2(z) = 1$. In the bottom-left panel of Figure \ref{fig:1comp_bprof}, we show the four $b(z)$ functions described above for the representative case of $q = 0.75$: the chain of inequalities in equation (\ref{eq:bz_hierarchy}) is apparent, even though the profiles qualitatively differ from those of the two disk cases. In particular, if one restricts to the constant $b$ case, we have $b_1 = 1 \le b_0(0) \le b_2(0) < \bM(0)$.

Figure \ref{fig:Binney_b_panel} shows the 2D maps of $\vphiqm$, $\Deltac$ and $\Delta$ for the Binney logarithmic halo with $q = 0.75$ and for three values of $b$, i.e. $b = 0.5, 1, 3.75$. The adopted values of $b$ allow to illustrate three representative kinematical behaviours. When $b = 0.5$, we expect $\vphiqm$ and $\Deltac$ to be everywhere positive, while a negative $\Delta$ is expected over some region in $\cminus$, being $0.5 < b_1 = 1$, and positive over $\bminus$, being $0.5 < b_2 \simeq 2.16$. This is nicely confirmed by the three left panels, where white regions indicate negative values. The three central panels correspond to $b = 1$, i.e. to the two-integral solutions, and complement the three middle panels in Figure \ref{fig:2int_1comp_panel}. The fields for $b = 1$ case are everywhere positive, in agreement with the general discussion and equations (\ref{eq:bM_B_exp}), (\ref{eq:B_b0C_exp}) and (\ref{eq:B_b2C_exp}). The increase of $b$ from $0.5$ to $1$ produces, as expected, that $\vphiqm$ decreases over $\bminus$ and increases over $\bplus$, $\Deltac$ decreases over $\dminus$ and increases over $\dplus$, and $\Delta$ behaves qualitatively as $\vphiqm$, becoming non-negative over $\cminus$ because $1 = b_1$. Finally, the three right panels correspond to the unphysical model with $b > \bM \simeq 3.33$, and a portion of $\bminus$ is already white; compared to the disk models, this region is more confined near the equatorial plane. Increasing further $b$ would extend the white region to cover the whole $\bminus$, while $\vphiqm$ would continue increasing in $\bplus$. As $3.75 > b_0 \simeq 1.26$, $\Deltac$ is now negative over a large portion of $\dminus$ and, increasing further $b$, $\Deltac$ would become negative over the whole $\dminus$ region, i.e. below the yellow line. At variance with disks, however, $\Deltac$ would remain positive and increase above the line. Finally, the negative region of $\Delta$ has now switched to the $\bminus$ region, being $3.75 > b_2$.

\begin{figure*}
\centering
        \includegraphics[width=\linewidth]{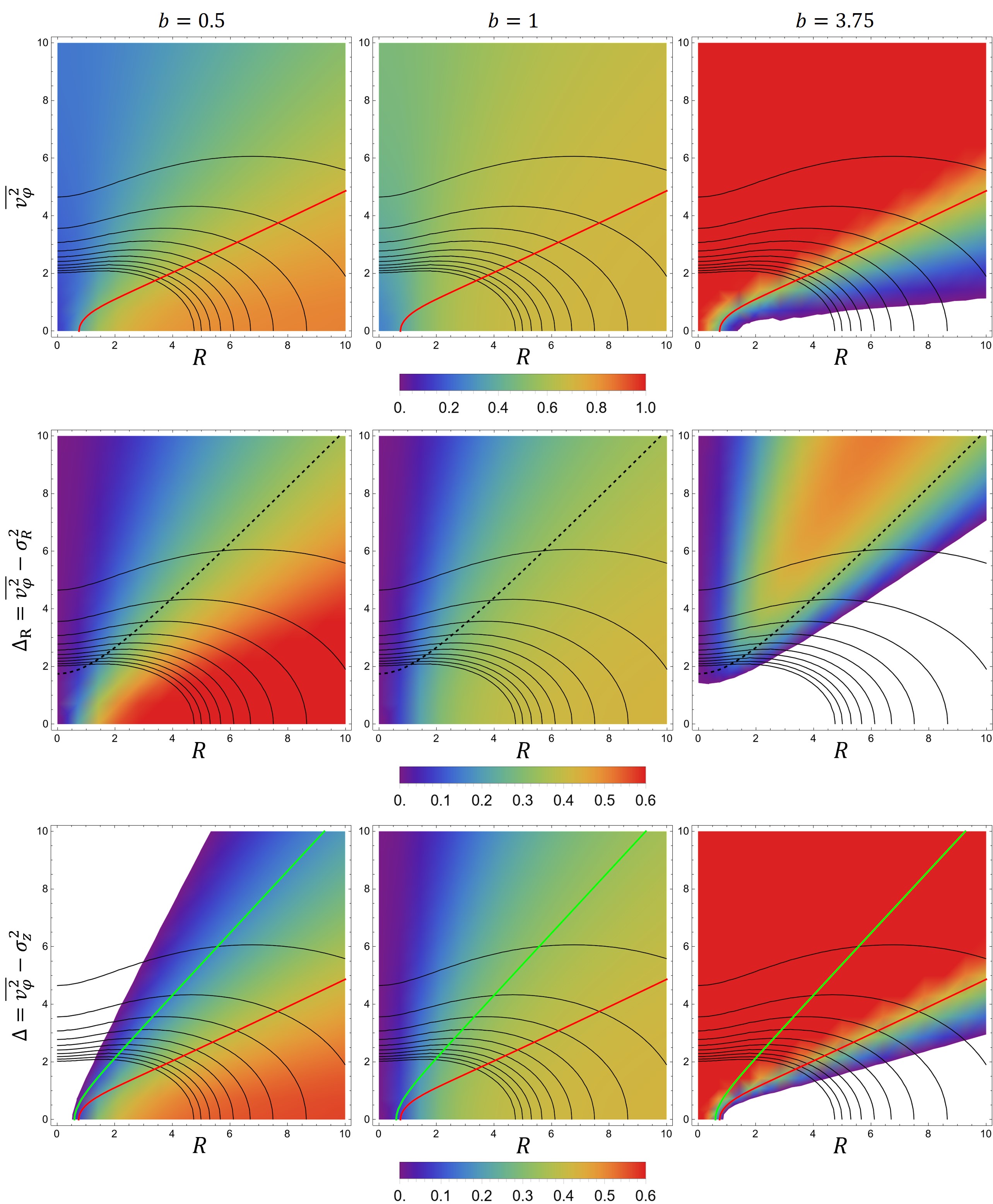}
        \caption{Maps in the ($R,z$)-plane of $\vphiqm$ (top row), $\Deltac$ (central row) and $\Delta$ (bottom row) in units of $\vh^2$, for the Binney logarithmic halo with $q = 0.75$ and constant $b=0.5$ (left column), $b=1$ (central column) and $b=2$ (right column). Black solid lines show equally spaced isodensity contours of the stellar distribution. The red, green and black dashed lines show, respectively, the $\bzero$, $\czero$ and $\dzero$ curves. White regions correspond to negative values of the fields.}
\label{fig:Binney_b_panel}
\end{figure*}

\section{A two-component model: the Miyamoto-Nagai disk in a Binney logarithmic halo}\label{sec:mnb_model}

We now study whether and how the presence of a dark matter (DM) halo changes the conclusions obtained for one-component models. In particular, we consider the two-component model made by a stellar MN disk embedded in a DM Binney logarithmic halo. We adopt this model because $\sigma_z^2$ and the commutator for the stellar distribution can be given in closed form for arbitrary values of $s$ and $q$ (\citealt{2015MNRAS.448.2921S}, hereafter S15); the resulting expressions are however sufficiently complicated to exclude the possibility of a simple analytical study of the $\mathscr{B}$, $\mathscr{C}$ and $\mathscr{D}$ regions, which are determined in a numerical way starting from the analytical formulae in S15.

By using the normalizations adopted for the MN model, i.e. $b_*$ for lengths and $GM_*/b_*$ for potentials, the dimensionless total potential is

\begin{equation}
    \Phi = \phis + \mathscr{R}\phih,\quad \mathscr{R} \equiv \frac{\vh^2 b_*}{GM_*},
    \label{eq:MNB_phitot}
\end{equation}

\noindent
in which $\phis = -1/\xi$, as given in equation (\ref{eq:MN_pd_pair}), and from equation (\ref{eq:B_pd_pair}) the dimensionless halo potential is

\begin{equation}
    \phih = \frac{1}{2} \ln\left( 1 + \frac{R^2}{\bh^2} + \frac{z^2}{q^2 \bh^2}\right),
\end{equation}

\noindent
where $\bh$ is the scale length of the halo now in units of $b_*$, i.e. a dimensionless parameter. For $\mathscr{R}=0$, the model reduces to the self-gravitating MN disk, while for $\mathscr{R} \gg 1$ we obtain the \lq\lq halo-dominated" case, in practice equivalent to considering $\Phi = \phih$ in the JEs. Notice that, at sufficiently large $r = \sqrt{R^2 + z^2}$, the model is always halo-dominated, independently of the value of $\mathscr{R}$.

In the two-integral case, $\sigma_z^2$ of the stellar distribution can be obtained by adding equations (10) and (17) of S15, while $\left[\rhos,\Phi\right]$ can be obtained from equation (27) of S15. It can be shown that, while $\sigma_z^2 \ge 0$ regardless of the model parameters, the sign of $\left[\rhos,\Phi\right]$ depends on the value of $q$, and if $q < 1$, there are regions at large $r$ where $\left[\rhos,\Phi\right] < 0$. Instead, if $q \ge 1$, then $\left[\rhos,\Phi\right] \ge 0$ everywhere. We notice that an asymptotic analysis of the solutions of the JEs shows that, independently of $s$ and $\mathscr{R}$, $\vphiqm$ becomes negative at large distances from the origin for $q \lesssim 0.85$; instead, for $q \gtrsim 0.85$, $\vphiqm \ge 0$ everywhere.

As we expect that the halo-dominated models are the most different compared to the one-component MN disk, in the following we focus on these models, and for simplicity, we consider a spherical $(q = 1)$ halo. In the three bottom panels of Figure \ref{fig:2int_1comp_panel}, we show the maps of $\sigma_z^2$, $\left[\rhos,\Phi\right]$ and $R\partial\Phi/\partial R$ for a halo-dominated MN disk with $s=1$, and $\bh = 2$. By comparison with the MN panels in the figure, it is apparent that the main effect of the halo is to produce an increase of $\sigma_z^2$, and of the commutator at large radii near the equatorial plane, as expected given the flat rotation curve of the halo at large radii. Instead, in the more central regions, the self-gravitating effects of the disk are not visible because the gravitational field of the disk is neglected everywhere in the presented model.

However, even though the gravitational potential is dominated by the halo, the disk density distribution enters the $B$, $C$ and $D$ functions, so we expect that the $\mathscr{B}$, $\mathscr{C}$ and $\mathscr{D}$ regions, whose determination now must be carried out numerically, are not coincident with those of the Binney halo (see Section \ref{sec:B_model}). Indeed, now the $\mathscr{B}$ and $\mathscr{C}$ regions are more similar to those of the MN disk, and $\dminus$ coincides with the ($R,z$)-plane, with $D = 0$ on the $z$-axis.

Due to the numerical determination of the $\mathscr{B}$, $\mathscr{C}$ and $\mathscr{D}$ regions, also the various constraints on the $b$ values had to be found numerically. In the bottom-right panel of Figure \ref{fig:1comp_bprof}, we show the four $b(z)$ constraints for the halo-dominated model with $q = 1$, $s = 1$ and $\bh = 2$. Reassuringly, they obey the inequality in equation (\ref{eq:bz_hierarchy}), being the commutator positive everywhere. For the constant $b$ case, they can be obtained analytically through an asymptotic analysis, and we found that $b_1 = b_0 = b_2 = 1$, and $\bM = 5/4$.

Figure \ref{fig:MNB_bPanel_bh2} shows the 2D maps of $\vphiqm$, $\Deltac$ and $\Delta$ for the same $q = 1$, $s = 1$ and $\bh = 2$, and for $b = 0.5, 1, 2$ (as in Figure \ref{fig:MN_b_Panel} for the MN disk). The main result is that the fields are more similar to those of the one-component MN disk in Figure \ref{fig:MN_b_Panel} than to those of the one-component Binney model in Figure \ref{fig:Binney_b_panel}, even though the MN disk is halo-dominated. Therefore, we conclude that under the $b$-ansatz a DM halo does not alter appreciably the qualitative behaviour of the kinematical fields of the disk, even when the halo is very massive.

\begin{figure*}
\centering
        \includegraphics[width=\linewidth]{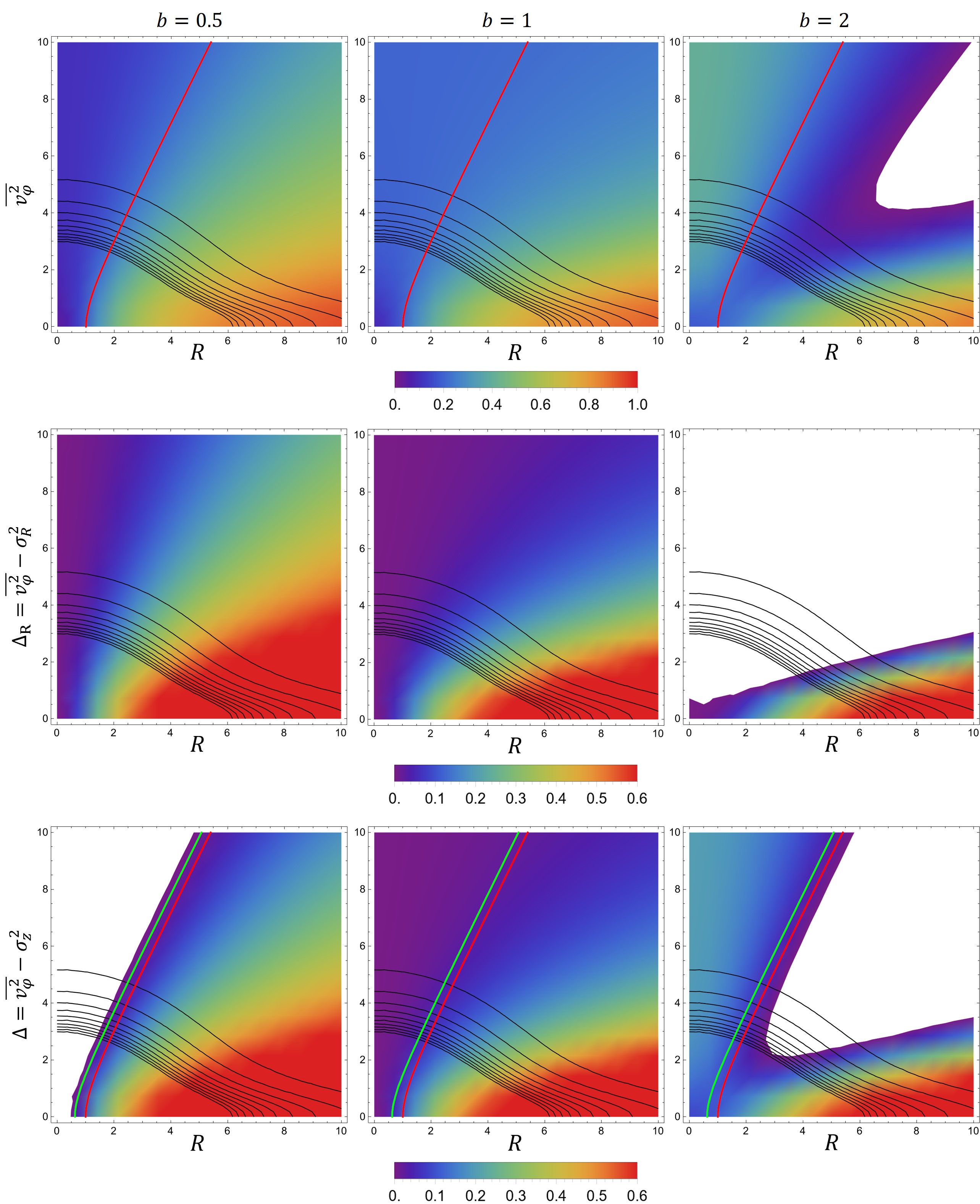}
        \caption{Maps in the ($R,z$)-plane of $\vphiqm$ (top row), $\Deltac$ (central row) and $\Delta$ (bottom row) in units of $GM_*/b_*$ for the MN disk with $s=1$ in a dominant ($\Phi = \phih$), spherical ($q=1$) Binney logarithmic halo with $\bh = 2$, and constant $b=0.5$ (left column), $b=1$ (central column) and $b=2$ (right column). Black solid lines show equally spaced isodensity contours of the stellar distribution. The red and green lines show, respectively, the $\bzero$ and $\czero$ curves. White regions correspond to negative values of the fields.}
\label{fig:MNB_bPanel_bh2}
\end{figure*}

\section{Summary and conclusions}\label{sec:disc_conc}

Unless a specific functional form of the phase-space DF is assumed, the JEs do not form a closed system of equations, and to solve them a choice must be made for the link between the velocity moments. In this work, we considered the JEs of axisymmetric systems with three unknowns ($\sigma_z^2$, $\sigma_R^2$ and $\vphiqm$), and focused on the choice of $\sigma_R^2=b(R,z)\sigma_z^2$ (the \lq\lq b-ansatz", where $b$ is a positive function), introduced in \citet{Cappellari2008MNRAS.390...71C} and widely used in the dynamical modelling of stellar systems. After recalling the JEs in cylindrical coordinates, and the expressions for the ansatz-independent quantities $\sigma_z^2$ and $[\rhos,\Phi]$, and the associated functions $B$, $C$ and $D$ defined over the ($R,z$)-plane, we investigated the behaviour of the kinematical quantities that depend on $b$ as follows:

\begin{itemize}
    \item We first gave the general expressions, as a function of $b$ and of the ansatz-independent quantities, for $\vphiqm$ and the related functions $\Delta = \vphiqm - \sigma_z^2$ and $\Deltac = \vphiqm - \sigma_R^2$, which enter the \citet{1980PASJ...32...41S} decomposition of $\vphiqm$ to derive the ordered rotational velocity $\vphim$. These general expressions are fundamental to investigate the positivity of $\vphiqm$, required for the model consistency, and of $\Delta$ and $\Deltac$, necessary for the \citet{1980PASJ...32...41S} decomposition and its variants. These general expressions also allow to predict, before solving the JEs, the trends of $\vphiqm$, $\Delta$ and $\Deltac$ as a function of $b$ throughout the galaxy, thus avoiding a time-consuming numerical exploration of the parameter space in the modelling.
    \item For systems with $\partial \Phi/\partial R \ge 0$ everywhere (the common situation), and $b = b(z)$, we discussed the positivity problem in full generality: we first showed how to determine the ansatz-independent regions of the $(R,z)$-plane where negativity is possible, i.e. $\bminus$ for $\vphiqm$, $\dminus$ for $\Deltac$ and a more complicate intersection of $\bpm$ and $\cpm$ for $\Delta$. Then, we showed how to determine the limiting values of $b(z)$ that guarantee positive kinematical fields over these regions: $b(z) \le \bM(z)$ for $\vphiqm \ge 0$, $b(z) \le b_0(z)$ for $\Deltac \ge 0$, and $b_1(z) \le b(z) \le b_2(z)$ for $\Delta \geq 0$. In the common case of $[\rhos,\Phi] \ge 0$ everywhere (as for oblate self-gravitating systems), one has $b_1(z) \le 1$ for $z \in \projz(\cminus)$, and $1 \le b_0(z) \le b_2(z) \le \bM(z)$ for $z \in \projz(\bminus)$.
    \item In particular, for systems with $[\rhos,\Phi] \ge 0$, we showed that as $b$ increases, in addition to the trivial fact that $\sigma_R^2 = b \sigma_z^2$ increases, $\vphiqm$ and $\Delta$ decrease over $\bminus$ and increase over $\bplus$, and $\Deltac$ decreases over $\dminus$ and increases over $\dplus$.
\end{itemize}

After the general analysis of the problem and the setup of the procedure to investigate the kinematical properties of a model as a function of $b$, we illustrated the method with three widely used galaxy models: the Miyamoto \& Nagai and Satoh disks, and the Binney logarithmic halo. In doing so, we obtained the following results:

\begin{itemize}
    \item The shape of the $\bpm$, $\cpm$ and $\dpm$ regions and the limits on $b(z)$ can all be obtained analytically. For the two disks, $\dminus$ (the region where $\partial \rhos\sigma_z^2 / \partial R < 0$) coincides with the ($R,z$)-plane, independently of the disk flattening; for the Binney halo, instead, there is a critical value of the potential flattening such that for flatter potentials the region $\dplus$ appears, bounded by a curve containing the $z$-axis. For all the models, $\bminus$ and $\cminus$ are also bounded by hyperbola-like curves but contain the $R$-axis. Concerning the limits on $b(z)$, for all the models $\projz(\bminus) = \projz(\cminus) = \projz(\dminus)$ coincide with the $z$-axis, and $b_1(z) \le 1 \le b_0(z) \le b_2(z) \le \bM(z)$.
    \item With the aid of two-dimensional maps, we illustrated how the kinematical fields change over the galaxy by changing $b$, assumed spatially constant. The results are in agreement with the general considerations, both for their trend with $b$, and for the regions where they become negative, confirming the utility of the preliminary analysis.
    \item We finally applied the procedure to the two-component model made by a stellar MN disk embedded in a dominant, spherical dark matter Binney halo; in this case, the investigation has been carried out numerically. The dark halo does not alter appreciably the qualitative trend of the kinematical fields of the stellar disk as a function of $b$, even when the halo is very massive.
\end{itemize}


  A final comment is worth on the relation between
  $\sigma_{\varphi}^2$ and $\sigma_R^2$ as a function of $b$ and
  $\gamma$, the other parameter that measures the orbital anisotropy,
  with positive values corresponding to radial anisotropy, and
  negative ones to tangential anisotropy [see equations (\ref{eq:gamma_def}) and (\ref{eq:gamma_cons})]. As noticed in Section \ref{sec:b_ans}, $\gamma$ is necessarily positive
  where $\Deltac<0$; where $\Deltac >0$, instead,
  $\gamma$ can be positive ($k^2>1$ in the Satoh decomposition) or
  negative ($k^2<1$). Allowing for $b$ and $\gamma$ as free
  parameters in the JAM method,  the stellar kinematics of regular rotators in the ATLAS$^{\rm 3D}$ survey could be successfully modelled (e.g., \citealt{Cappellari2016ARA&A..54..597C}).  It turned out that on average $b>1$ and $\gamma \approx 0$, within about one effective radius; a limit was also found in the orbital anisotropy, of $\beta =1-1/b \la 0.7 \epsilon_{\mathrm{intr}}$, where $\epsilon_{\mathrm{intr}}$ is the intrinsic flattening of the galaxy. This limit had first been found with the Schwarzschild orbit superposition method for the galaxies of the SAURON survey
  (\citealt{Cappellari2007MNRAS.379..418C}), and thereafter has been often shown in the $(V/\sigma,\epsilon)$, and $(\lambda_R,\epsilon)$ diagrams of larger samples of regular rotators, until those of the MaNGA survey (e.g., \citealt{Wang2020MNRAS.495.1958W}). In a recent work, \citet{Wang2021MNRAS.500L..27W}  investigated the origin of this limit: with the JAM modeling, they built mock samples of axisymmetric galaxies with a velocity dispersion ellipsoid appropriate to reproduce the observed fast rotators ($\gamma \approx 0$), and with the mass-follows-light hypothesis. They found an upper limit on $\beta $ by excluding values producing a negative $\vphim^2$ over a non-negligible region of the galaxy, clearly an unphysical solution.  Our analysis can provide an interpretation for the existence of this limit. In fact, from equation (\ref{eq:gamma_def}), the request of positivity for $\vphim^2$ with $\gamma\approx 0$ translates into the request of $\Deltac \gtrsim 0$, and then of $b \lesssim b_0$.  Interestingly, for the mass models considered in this work, $b_0$ increases with the flattening [see equations (\ref{eq:b0_MN}), (\ref{eq:b0_S}), (\ref{eq:binney_zc0}) and (\ref{eq:B_b0C_exp})], in agreement with the trend shown by the maximum values estimated for $\beta $ (of $\beta_{\rm max} \sim 0.7 \epsilon_{\mathrm{intr}})$.

\section*{Acknowledgements}

The referee is thanked for useful comments, that improved the clarity of the paper. We acknowledge financial support from the Project PRIN MUR 2022 (code 2022ARWP9C) \lq\lq Early Formation and Evolution of Bulge and HalO (EFEBHO)", PI: Marconi, M., funded by European Union – Next Generation EU (CUP J53D23001590006).

\section*{Data availability}

No datasets were generated or analyzed in support of this research.



\bibliographystyle{mnras}
\bibliography{dd24} 




\appendix

\section{The most general $\vphiqm $ decomposition, and an alternative ansatz for the radial JE}\label{app:gen_ans}

\subsection{The $\lambda$-decomposition for $\vphiqm$}\label{sec:lambda_dec}

Once the necessary condition $\vphiqm \ge 0$ is satisfied by the solution of equation (\ref{eq:jeans_rad_anis}), a choice is required to decompose $\vphiqm$ in its streaming $\overline{v_{\varphi}}^2$ and dispersion $\sigma_{\varphi}^2$ components. 
By construction, the Satoh decomposition presented in Sections 2.1 and 3.1  requires $\Delta$ (or $\Deltac$) $\ge 0$. Clearly, if $\Delta$ (or $\Deltac$) is negative over some region of the ($R,z$)-plane, a generalized Satoh decomposition can be applied just by imposing $k(R,z) = 0$ over this region. However, this decomposition is quite \lq\lq rigid", leading necessarily to $\overline{v_{\varphi}} = 0$ and $\sigma_{\varphi}^2 = \vphiqm$ there.
It is therefore useful to provide the most general decomposition of a positive $\vphiqm$: we refer to this as the $\lambda$-decomposition. It is simple to realize that \textit{all} the decompositions are of the form

 \begin{equation}
     \overline{v_{\varphi}} = \lambda \sqrt{\vphiqm} \; \; , \; \; \sigma_{\varphi}^2 = (1-\lambda^2)\vphiqm,
     \label{eq:l_dec}
 \end{equation}

\noindent
where in full generality $0 \le \lambda^2(R,z) \le 1$, with negative values of $\lambda$ corresponding to regions of \lq\lq counter-rotation". Two limit cases are possible: $\lambda = 0$ corresponds to no net rotation in the azimuthal direction, and $\lambda^2 = 1$ to a maximally rotating model. The value of $\lambda$ corresponding to the isotropic rotator is obtained by imposing $\sigma_{\varphi}^2 = \sigma_z^2$ from the second of equation (\ref{eq:l_dec}), so that

\begin{equation}
    \lambda_{\text{I}}^2(R,z) \equiv \frac{\Delta}{\vphiqm},
    \label{eq:l_iso}
\end{equation}

\noindent
where $\Delta = \vphiqm - \sigma_z^2 \ge 0$ for isotropy is given by the second of equation (\ref{eq:2int_jeans}).

Of course, being the $\lambda$-decomposition the most general possible, any other decomposition of $\vphiqm$ must be related to it. For example, if $\Delta \ge 0$ and the Satoh decomposition is hence possible, the following relation between $\lambda$ and $k$ holds:

 \begin{equation}
     \lambda = k \sqrt{\frac{\Delta}{\vphiqm}}.
     \label{eq:lambda_k_rel}
 \end{equation}

This shows how, except for the non-rotating case, a spatially constant $k$ does \textit{not} necessarily correspond to a spatially constant $\lambda$, and vice versa.

\subsection{The $\mu$-ansatz for $\Deltac$}\label{sec:mu_ans}

As discussed in Section \ref{sec:gen_sys}, an ansatz is required to solve the radial JE, and the consequences of assuming the $b$-ansatz are discussed in this work. We consider here a second family of ansatz (\lq\lq\textit{$\mu$-ansatz}") in which priority is given to the relation between $\Deltac$ and the unknown $\sigma_R^2$, so that:

\begin{equation}
    \Deltac = \mu \sigma_R^2,\quad \vphiqm = (1+\mu)\sigma_R^2,
    \label{eq:mu_ans_def1}
\end{equation}

\noindent
with the obvious constraint that $\mu(R,z) \ge -1$. Given this choice, it is possible to solve in closed form the radial JE and obtain $\sigma_R^2$. In fact, combining equations (\ref{eq:jeans_rad_anis}) and (\ref{eq:mu_ans_def1}) leads to a linear, non-homogeneous differential equation prone to an explicit solution, similar to that for anisotropic spherically symmetric systems (\citealt{Binney1982MNRAS.200..361B}, C21). Assuming a vanishing radial \lq\lq pressure" at infinity, the solution of the equation reads 

\begin{equation}
    \rhos\sigma_R^2 = \int_R^{\infty} \rhos(u,z)\frac{\partial\Phi(u,z)}{\partial u} e^{-g(u,R,z)} du,
    \label{eq:mu_ans_ode_sol}
\end{equation}

\noindent
where

\begin{equation}
    g(u,R,z) \equiv \int_{u}^R \frac{\mu(v,z)}{v} dv,
    \label{eq:mu_ans_g_def}
\end{equation}

\noindent
and so, if $\mu$ is independent of $R$,

\begin{equation}
    \rhos\sigma_R^2 = \frac{1}{R^{\mu}}\int_R^{\infty}\rho(u,z)\frac{\partial\Phi(u,z)}{\partial u} u^{\mu} du.
    \label{eq:mu_ans_ode_sol_part}
\end{equation}

Quite obviously, once the JEs are solved, each adopted ansatz can be expressed in terms of the others. For example, the $\mu$ function associated with a $b$-ansatz is given by

\begin{equation}
    \mu = \frac{\Deltac}{b\sigma_z^2},
    \label{eq:b_ans_mu}
\end{equation}

\noindent
further reducing to $\mu = \Delta/\sigma_z^2$ in the two-integral case, when $b=1$ and $\Deltac = \Delta$. Of course, if one adopts the $\mu$-ansatz, the equivalent $b$ is just given by equation (\ref{eq:b_ans_def}), where now $\sigma_R^2$ is computed as in equation (\ref{eq:mu_ans_ode_sol}).

An important advantage of using the $\mu$-ansatz should be noticed. From equation (\ref{eq:mu_ans_ode_sol}) it follows that $\sigma_R^2 \ge 0$ for realistic potentials with $\partial \Phi/\partial R \ge 0$. Therefore, the condition $\mu \ge 1$ in the second of equation (\ref{eq:mu_ans_def1}) automatically assures the positivity of $\sigma_R^2$ \textit{and} of $\vphiqm$, without the need for complicated analytical or numerical checks. Therefore, the combination of the $\mu$-ansatz with the generalized Satoh $k$-decomposition, or better with the most general $\lambda$-decomposition, allows for a relatively simple and flexible closure of the JEs.

\section{Relations between the regions $\mathscr{B}^{\pm}$, $\mathscr{C}^{\pm}$, $\mathscr{D}^{\pm}$, and between the limiting values for {\lowercase{\it b}} }\label{app:tech_res}

Some general results about the $\mathscr{B}$, $\mathscr{C}$ and $\mathscr{D}$ regions of axisymmetric systems described by the JEs in Section \ref{sec:gen_sys}, with $\partial \Phi/\partial R \ge 0$ everywhere, and restricting to $b = b(z)$, are proved.

\subsection{The ansatz-independent sets}
\label{app:sets}

The following ansatz-independent sets, based on the sign of the commutator, are important for the following discussion:

\begin{equation}
\begin{split}
    \omegapm &= \left\{ (R,z) : \left[ \rhos, \Phi \right] \gtrless 0 \right\},\\
    \omegazero &= \left\{ (R,z) : \left[ \rhos, \Phi \right] = 0 \right\};
\end{split}
\label{eq:omega_def}
\end{equation}

\noindent
we also define $\opmzero = \omegapm \cup \omegazero$.

\begin{theorem}
\label{thm:0}
For all systems, $\bminus \subseteq \dminus$, and $\omegaminus \subseteq \dminus$.

\end{theorem}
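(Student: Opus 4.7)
The plan is to derive both inclusions as pointwise algebraic consequences of the two identities already available, namely $B = RD/\rhos + \sigma_z^2$ from equation (\ref{eq:b_def}) and $D = \left[\rhos,\Phi\right] - \rhos\,\partial\Phi/\partial R$ from equation (\ref{eq:jeans_identity}), combined with the elementary sign information $\rhos \geq 0$, $\sigma_z^2 \geq 0$, $R \geq 0$, and (the standing hypothesis of this appendix) $\partial\Phi/\partial R \geq 0$. No integration or differentiation is needed, just sign bookkeeping.

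For $\bminus \subseteq \dminus$, I would pick an arbitrary $\mathrm{P} = (R,z) \in \bminus$, so that $B(\mathrm{P}) < 0$. Substituting into the identity from equation (\ref{eq:b_def}) and rearranging gives $RD/\rhos < -\sigma_z^2 \leq 0$, hence $RD < 0$. Since $R \geq 0$, this forces $R > 0$ and $D(\mathrm{P}) < 0$, i.e. $\mathrm{P} \in \dminus$. It is worth noting that this half of the statement does not rely on $\partial\Phi/\partial R \geq 0$ at all. As an internal consistency check on the symmetry axis, the same identity gives $B(0,z) = \sigma_z^2 \geq 0$, so the $z$-axis is automatically excluded from $\bminus$, in line with the behaviour seen in the explicit models of Section \ref{sec:one_comp_mods}, where $\dzero$ contains the axis.

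For $\omegaminus \subseteq \dminus$, I would take any $\mathrm{P} \in \omegaminus$, so that $\left[\rhos,\Phi\right](\mathrm{P}) < 0$. By the standing hypothesis $\partial\Phi/\partial R \geq 0$ together with $\rhos \geq 0$, the term $-\rhos\,\partial\Phi/\partial R$ is non-positive at $\mathrm{P}$; substituting into the identity from equation (\ref{eq:jeans_identity}) expresses $D(\mathrm{P})$ as the sum of a strictly negative quantity and a non-positive one, so $D(\mathrm{P}) < 0$, i.e. $\mathrm{P} \in \dminus$.

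No real obstacle is expected, since both inclusions reduce to sign arithmetic on the two defining identities; the only subtlety worth flagging is the asymmetric role of the assumption $\partial\Phi/\partial R \geq 0$, which is used for the second inclusion but not for the first.
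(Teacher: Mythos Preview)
Your proof is correct and follows essentially the same approach as the paper: both inclusions are obtained pointwise from the identities $B = RD/\rhos + \sigma_z^2$ and $D = \left[\rhos,\Phi\right] - \rhos\,\partial\Phi/\partial R$, combined with the non-negativity of $\sigma_z^2$ and of $\partial\Phi/\partial R$, respectively. Your write-up is simply more explicit about the sign bookkeeping (and correctly flags that only the second inclusion needs $\partial\Phi/\partial R \ge 0$), but the underlying argument is identical to the paper's.
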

\begin{proof}
    Let $\mathrm{P} = (R,z) \in \bminus$. From the first of equation (\ref{eq:b_def}) and given $\sigma_z^2 \ge 0$, then $\mathrm{P} \in \dminus$, and the first inclusion is proved. Let $\mathrm{P} = (R,z) \in \omegaminus$. From equation (\ref{eq:jeans_identity}) and given $\partial\Phi/\partial R \ge 0$, then $\mathrm{P} \in \dminus$, and the second inclusion is proved. From the proved results immediately follows that $\dpzero \subseteq \bpzero$, and $\dpzero \subseteq \opzero$.
\end{proof}

\begin{theorem}
\label{thm:2}
If $\bminus \subseteq \opzero$, then $\bminus \subseteq \cplus$. If $\cminus \subseteq \opzero$, then $\cminus \subseteq \bplus$. 
\end{theorem}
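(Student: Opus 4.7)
The plan is to exploit the algebraic identity linking $B$ and $C$ to the commutator $[\rhos,\Phi]$, which is immediate from the definitions in equations (\ref{eq:b_def}) and (\ref{eq:c_function_def}). Adding these two equations, the terms $R\,\partial\Phi/\partial R$ and $\sigma_z^2$ cancel, leaving the clean relation
\begin{equation}
B + C \;=\; \frac{R\,[\rhos,\Phi]}{\rhos}.
\label{eq:BCsum_plan}
\end{equation}
Since $\rhos\ge 0$ and $R\ge 0$, the sign of $B+C$ is controlled entirely by the sign of the commutator. This is the whole engine of the proof, and everything else is a one-line sign argument.

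For the first implication, I would take $\mathrm{P}=(R,z)\in\bminus$, so that $B(\mathrm{P})<0$ by definition (\ref{eq:b_sets_def}). The hypothesis $\bminus\subseteq\opzero$ gives $[\rhos,\Phi](\mathrm{P})\ge 0$, hence by (\ref{eq:BCsum_plan}) one has $B+C\ge 0$ at $\mathrm{P}$. Combining with $B<0$ yields $C>0$, i.e.\ $\mathrm{P}\in\cplus$, which proves $\bminus\subseteq\cplus$. The second implication is entirely symmetric: for $\mathrm{P}\in\cminus$ one has $C<0$, and the hypothesis $\cminus\subseteq\opzero$ together with (\ref{eq:BCsum_plan}) forces $B>0$, so $\mathrm{P}\in\bplus$.

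There is essentially no obstacle here; the only subtlety is making sure the identity (\ref{eq:BCsum_plan}) is invoked in the correct form. It is worth noting that the axis $R=0$ needs no special treatment since there $B+C=0$ and the strict inequalities carry through by the assumption that the point lies in $\bminus$ or $\cminus$ (which are open conditions on the sign of $B$ or $C$). I would therefore present the proof as a two-paragraph argument: first derive (\ref{eq:BCsum_plan}) from (\ref{eq:b_def}) and (\ref{eq:c_function_def}), then apply it separately to the two cases. The corollary that $\bminus$ and $\cminus$ are disjoint whenever $[\rhos,\Phi]\ge 0$ everywhere (already used in Section \ref{sec:gen_cons}) follows as an immediate consequence.
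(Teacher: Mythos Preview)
Your proof is correct and is essentially the same as the paper's: both rest on the identity $C = R\,[\rhos,\Phi]/\rhos - B$ from equation~(\ref{eq:c_function_def}) (equivalently your $B+C = R\,[\rhos,\Phi]/\rhos$), followed by the obvious sign argument. The paper's version is terser, citing the existing relation rather than re-deriving it, but the logical content is identical.
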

\begin{proof}
    Let $\mathrm{P} = (R,z) \in \bminus$, and $\left[\rhos,\Phi\right] \ge 0$ there. From the second of equation (\ref{eq:delta_bans}), with $B < 0$ and $\left[\rhos,\Phi\right] \ge 0$, then $\mathrm{P} \in \cplus$, and the first part the Theorem follows immediately if $\left[\rhos,\Phi\right] \ge 0$ for all points in $\bminus$. The second part is proved similarly, by considering $C < 0$ in equation (\ref{eq:delta_bans}).
\end{proof}

\begin{theorem}
\label{thm:3}
    If $\bpzero \subseteq \omegaminus$, then $\bpzero \subset \cminus$. If $\cpzero \subseteq \omegaminus$, then $\cpzero \subset \bminus$.
\end{theorem}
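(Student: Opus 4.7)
The plan is to mirror the strategy already used for Theorem \ref{thm:2}, exchanging the roles of $B$ and $C$. The key algebraic observation is already available: combining the two equivalent forms of $C$ in equation (\ref{eq:c_function_def}) gives the identity
\begin{equation*}
    B + C \;=\; \frac{R\,[\rhos,\Phi]}{\rhos},
\end{equation*}
which holds pointwise on the ($R,z$)-plane independently of the ansatz. This single identity is the engine of the whole proof.

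For the first statement, I would take $\mathrm{P} = (R,z) \in \bpzero$, so that $B(\mathrm{P}) \ge 0$, and invoke the hypothesis $\bpzero \subseteq \omegaminus$ to secure $[\rhos,\Phi](\mathrm{P}) < 0$. Rewriting the identity as $C = R[\rhos,\Phi]/\rhos - B$ and using $R \ge 0$, $\rhos \ge 0$, one obtains $C(\mathrm{P}) \le 0$, with strict inequality whenever $R > 0$; hence $\mathrm{P} \in \cminus$, and the inclusion $\bpzero \subset \cminus$ follows.

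For the second statement, the argument is completely symmetric: one takes $\mathrm{P} \in \cpzero$, so $C(\mathrm{P}) \ge 0$, uses $\cpzero \subseteq \omegaminus$ to get $[\rhos,\Phi](\mathrm{P}) < 0$, and then reads the same identity in the form $B = R[\rhos,\Phi]/\rhos - C$ to conclude $B(\mathrm{P}) < 0$, i.e.\ $\mathrm{P} \in \bminus$.

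The only mildly delicate point I anticipate is the behaviour on the $z$-axis $R=0$, where the term $R[\rhos,\Phi]/\rhos$ vanishes and the two inequalities can only be concluded as non-strict; however, as in the analogous discussion following Theorem \ref{thm:2} and in the examples of Section \ref{sec:one_comp_mods}, $R=0$ is a one-dimensional boundary set on which $B$, $C$ and $[\rhos,\Phi]$ commonly all vanish, so it does not obstruct the set-theoretic inclusion stated in the theorem. No other obstacle is expected, since the argument uses only the pointwise identity together with the sign hypotheses $R \ge 0$ and $\rhos \ge 0$.
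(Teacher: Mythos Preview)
Your argument is correct and is essentially the paper's own proof: both hinge on the pointwise identity $B + C = R[\rhos,\Phi]/\rhos$ from equation~(\ref{eq:c_function_def}), and then read off the sign of $C$ (resp.\ $B$) from $B \ge 0$ (resp.\ $C \ge 0$) together with $[\rhos,\Phi] < 0$. Your treatment is in fact slightly more explicit than the paper's, which does not comment on the $R=0$ boundary.
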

\begin{proof}
    Let $\mathrm{P} = (R,z) \in \bplus$, and $\left[\rhos,\Phi\right] < 0$ there. From the second of equation (\ref{eq:delta_bans}), with $B \ge 0$ and $\left[\rhos,\Phi\right] < 0$, then $\mathrm{P} \in \cminus$, and the first part the Theorem follows immediately if $\left[\rhos,\Phi\right] < 0$ for all points in $\bpzero$. The second part is proved similarly, by considering $C \ge 0$ in equation (\ref{eq:delta_bans}). 
\end{proof}

\subsection{Relations between the limits for $b$}\label{app:bz_thm}

\begin{theorem}
\label{thm:4}
For all systems, $b_0(z) \le \bM(z)$ over the rectangular strip $R \ge 0$ and $z \in \projz(\bminus)$. 
\end{theorem}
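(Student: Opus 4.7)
The plan is to combine two ingredients: the set inclusion $\bminus \subseteq \dminus$ already furnished by Theorem \ref{thm:0}, and an elementary pointwise inequality between the integrands that define $\bM(z)$ and $b_0(z)$. Once both ingredients are in place, the conclusion follows by a standard comparison of minima.

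First I would fix $z \in \projz(\bminus)$ and note that, by Theorem \ref{thm:0}, $\projz(\bminus) \subseteq \projz(\dminus)$, so that $b_0(z)$ is well defined at the same $z$; moreover, the radial section $\bzminus$ is contained in $\dzminus$. At every point of $\bzminus$, both $B<0$ and $D<0$, and since $\sigma_z^2 \ge 0$, the defining relation $B = RD/\rhos + \sigma_z^2$ in equation (\ref{eq:b_def}) yields
\begin{equation}
    |B| \;=\; \frac{R\,|D|}{\rhos} - \sigma_z^2 \;\le\; \frac{R\,|D|}{\rhos}.
\end{equation}
Dividing by $R|B|\,|D|/\rhos > 0$ and multiplying by the non-negative quantity $\partial\Phi/\partial R$ gives the pointwise inequality
\begin{equation}
    \frac{R}{|B|}\frac{\partial\Phi}{\partial R} \;\ge\; \frac{\rhos}{|D|}\frac{\partial\Phi}{\partial R} \quad \text{on } \bzminus.
\end{equation}

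Then I would take the minimum of both sides over $\bzminus$: the left-hand side is exactly $\bM(z)$, while the right-hand side, being the minimum of a non-negative function over a subset $\bzminus \subseteq \dzminus$, is at least as large as the minimum over $\dzminus$, namely $b_0(z)$. Chaining these inequalities yields $b_0(z) \le \bM(z)$ for every $z \in \projz(\bminus)$, which is the claim, since the independence of both functions from $R$ makes the inequality hold on the full rectangular strip $R \ge 0$, $z \in \projz(\bminus)$.

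There is no serious obstacle here: the whole argument is essentially algebraic once Theorem \ref{thm:0} is invoked. The only point requiring mild care is the direction of the inequality when passing to $|B|$ (the sign handling of $D<0$), and the monotonicity statement that minimising a fixed non-negative function over a smaller set can only increase the value, which is what lets us replace $\dzminus$ by $\bzminus$ in the last step.
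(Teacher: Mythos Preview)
Your proof is correct and follows essentially the same route as the paper: invoke Theorem~\ref{thm:0} to get $\bzminus \subseteq \dzminus$ and well-definedness of $b_0(z)$, use $B = RD/\rhos + \sigma_z^2$ on $\bzminus$ to obtain $|B| \le R|D|/\rhos$ and hence the pointwise inequality $\tfrac{R}{|B|}\tfrac{\partial\Phi}{\partial R} \ge \tfrac{\rhos}{|D|}\tfrac{\partial\Phi}{\partial R}$, then pass to minima and enlarge the domain from $\bzminus$ to $\dzminus$. The paper's proof is the same argument, only phrased slightly more tersely.
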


\begin{proof}
    From equation (\ref{eq:bmz_criterion}), $\bM(z)$ is defined over the strip in the ($R,z$)-plane with $z \in \projz(\bminus)$, and from equation (\ref{eq:b0z_criterion}) $b_0(z)$ is defined over the strip with $z \in \projz(\dminus)$. From Theorem \ref{thm:0}, $\bminus \subseteq \dminus$, so that $b_0(z)$ is certainly defined for $z \in \projz(\bminus)$. Over $\bminus$, both $B$ and $D$ are negative, so that from the first of equation (\ref{eq:b_def}), $\lvert B \rvert = R \lvert D \rvert/\rhos - \sigma_z^2 \le R\lvert D \rvert/\rhos$. It follows that the minimum in the first of equation (\ref{eq:bmz_criterion}) over $\bzminus$ is larger than the minimum obtained when $\lvert B \rvert$ is replaced by $R\lvert D \rvert/\rhos$, and in turn this minimum is larger than that evaluated over the wider set $\dzminus \supseteq \bzminus$: from the first of equation (\ref{eq:b0z_criterion}), this last minimum is the value of $b_0(z)$ for points with $z \in \projz(\bminus)$, concluding the proof. 
\end{proof}

\begin{theorem}
\label{thm:b1}
Let $\cminus \subseteq \opzero$: then $b_1(z) \le 1$ over the rectangular strip $R \ge 0$ and $z \in \projz(\cminus)$. 
\end{theorem}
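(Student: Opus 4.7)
The plan is to read off the inequality directly from the second form of $b_1(z)$ given in equation (\ref{eq:b1_def}):
\[
b_1(z) = \max_{\bzplus \cap \czminus} \left( 1 - \frac{R[\rhos, \Phi]}{\rhos B} \right).
\]
I would begin by fixing $z \in \projz(\bplus \cap \cminus)$ so that the set $\bzplus \cap \czminus$ is non-empty. At every point of this set one has $B > 0$ by definition of $\bzplus$ and $C < 0$ by definition of $\czminus$; the latter places the point, in particular, in the full set $\cminus$.

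Next, I would invoke the hypothesis $\cminus \subseteq \opzero$: it forces $[\rhos,\Phi] \ge 0$ at every point of $\bzplus \cap \czminus$. Combined with $R \ge 0$, $\rhos \ge 0$, and $B > 0$, the quotient $R[\rhos,\Phi]/(\rhos B)$ is well-defined and non-negative there, so the quantity inside the max is at most $1$ pointwise. Monotonicity of the max preserves the inequality, yielding $b_1(z) \le 1$. For the remaining $z \in \projz(\cminus) \setminus \projz(\bplus \cap \cminus)$ the set $\bzplus \cap \czminus$ is empty, so $b_1(z)$ encodes no constraint on $b(z)$ and can be assigned the conventional value $-\infty$ (or any quantity $\le 1$), still consistent with the stated bound.

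There is essentially no obstacle. The argument is bookkeeping: observing that $\bzplus \cap \czminus \subset \cminus$ so that the hypothesis applies at every point where the max is taken, and checking that the signs of $R$, $\rhos$, $B$, and $[\rhos,\Phi]$ together determine the sign of the quotient unambiguously. Beyond the second form of $b_1(z)$ in equation (\ref{eq:b1_def}) and the definition of $\opzero$ in equation (\ref{eq:omega_def}), nothing from the appendix or the main text is needed.
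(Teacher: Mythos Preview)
Your proposal is correct and follows essentially the same approach as the paper: both read off the bound from the second form of $b_1(z)$ in equation~(\ref{eq:b1_def}) by noting that the hypothesis forces $[\rhos,\Phi]\ge 0$ on the relevant section, whence the quotient $R[\rhos,\Phi]/(\rhos B)$ is non-negative and the max is at most $1$. The only cosmetic difference is that the paper first invokes Theorem~\ref{thm:2} to identify $\bplus\cap\cminus=\cminus$, so that $\projz(\bplus\cap\cminus)=\projz(\cminus)$ and your ``remaining $z$'' case is actually empty; your vacuous handling of that case is harmless but unnecessary once this identification is made.
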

\begin{proof}
    For assumption $\cminus \subseteq \opzero$, so that from Theorem \ref{thm:2} $\bplus \cap \cminus = \cminus$, and from equation (\ref{eq:b1_def}) with $\left[ \rhos, \Phi \right] \ge 0$, $b_1(z) \le 1$ for $z \in \projz(\cminus)$. 
\end{proof}

\begin{theorem}
\label{thm:bMb0b2}
Let $\bminus \subseteq \opzero$: then $b_2(z) \ge 1$ and $b_0(z) \le b_2(z) \le \bM(z)$ over the rectangular strip $R \ge 0$ and $z \in \projz(\bminus)$.

\end{theorem}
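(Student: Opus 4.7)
The plan is to chain three inequalities over the strip $z\in\projz(\bminus)$: first $1\le b_2(z)$, then $b_2(z)\le\bM(z)$, and finally $b_0(z)\le b_2(z)$. The hypothesis $\bminus\subseteq\opzero$ means $\left[\rhos,\Phi\right]\ge 0$ at every point of $\bminus$, and by Theorem \ref{thm:2} it implies $\bminus\subseteq\cplus$, hence $\bminus\cap\cplus=\bminus$ and $\projz(\bminus\cap\cplus)=\projz(\bminus)$. Consequently, over the strip in question, the minimization set for $b_2(z)$ in equation (\ref{eq:b2_def}) reduces from $\bzminus\cap\czplus$ to the whole radial section $\bzminus$, which is the key simplification that makes $b_2$ comparable with $\bM$ and $b_0$ on the same set.

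With this simplification, the first inequality $b_2(z)\ge 1$ is immediate: the integrand $1+R\left[\rhos,\Phi\right]/(\rhos|B|)$ in the second line of equation (\ref{eq:b2_def}) is $\ge 1$ pointwise on $\bzminus$ since the commutator is nonnegative there. For $b_2(z)\le\bM(z)$, I would compare the two integrands pointwise on $\bzminus$: the integrand of $\bM(z)$ in equation (\ref{eq:bmz_criterion}) differs from that of $b_2(z)$ only by the extra nonnegative summand $\rhos\sigma_z^2/(\rhos|B|)$ inside the parenthesis, so its pointwise value dominates, and taking minima over the same set $\bzminus$ preserves the inequality.

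The substantive step is $b_0(z)\le b_2(z)$, and this is the piece requiring the most care. Theorem \ref{thm:0} gives $\bminus\subseteq\dminus$, so $b_0(z)$ is defined on $\projz(\bminus)$, and moreover $\bzminus\subseteq\dzminus$ at each such $z$. I would proceed in two moves. First, enlarging the minimization set yields
\begin{equation}
b_0(z)=\min_{\dzminus}\!\left(1+\frac{\left[\rhos,\Phi\right]}{|D|}\right)\le\min_{\bzminus}\!\left(1+\frac{\left[\rhos,\Phi\right]}{|D|}\right).
\end{equation}
Second, I need a pointwise comparison on $\bzminus$ between $\left[\rhos,\Phi\right]/|D|$ and $R\left[\rhos,\Phi\right]/(\rhos|B|)$. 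Since $B,D<0$ on $\bminus$, the identity $B=RD/\rhos+\sigma_z^2$ from equation (\ref{eq:b_def}) gives $|B|=R|D|/\rhos-\sigma_z^2\le R|D|/\rhos$, hence $R/(\rhos|B|)\ge 1/|D|$. Multiplying by the nonnegative factor $\left[\rhos,\Phi\right]$ (this is exactly where the hypothesis $\bminus\subseteq\opzero$ is needed) yields the desired pointwise inequality, and taking $\min_{\bzminus}$ gives $b_0(z)\le b_2(z)$.

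The main obstacle is keeping track of the signs and the set inclusions simultaneously: $|B|$ dominates over $|D|$ only after the correct sign-accounting of the identity $B=RD/\rhos+\sigma_z^2$ on $\bminus$, and the pointwise comparison inverts to the ``right'' direction only because $\left[\rhos,\Phi\right]\ge 0$ there. Without the hypothesis $\bminus\subseteq\opzero$, the commutator could change sign on $\bzminus$ and the comparison of the two integrands would fail, so the hypothesis is used in an essential way.
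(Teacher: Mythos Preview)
Your proof is correct and follows essentially the same route as the paper: both use Theorem~\ref{thm:2} to reduce the $b_2$ minimization to $\bzminus$, read off $b_2\ge 1$ from the nonnegativity of the commutator there, compare the integrands of $\bM$ and $b_2$ via the extra $\sigma_z^2/|B|$ term, and establish $b_0\le b_2$ by combining the set inclusion $\bzminus\subseteq\dzminus$ with the pointwise bound $|B|\le R|D|/\rhos$ on $\bminus$. The only cosmetic difference is that the paper performs the two moves in the $b_0\le b_2$ step in the reverse order (first the pointwise comparison, then the set enlargement), which is logically equivalent.
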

\begin{proof}
    From equation (\ref{eq:b2_def}) and Theorem \ref{thm:2}, $b_2(z)$ is defined over $\bminus \cap \cplus = \bminus$, and from the assumed positivity of $\left[ \rhos, \Phi \right]$ over $\bminus$, it follows that $b_2(z) \ge 1$, for $z \in \projz(\bminus)$. Moreover, from Theorem \ref{thm:4} the functions $b_2(z)$, $b_0(z)$ and $\bM(z)$ are all defined for points with $z \in \projz(\bminus)$. Again from equation (\ref{eq:b2_def}), using the inequality $\lvert B \rvert \le R\lvert D \rvert/\rhos$ (see the proof of Theorem \ref{thm:4}), and finally from the positivity of the commutator over $\bminus$, it follows that $b_2(z)$ is larger than the minimum of the function appearing in the second of equation (\ref{eq:b0z_criterion}), now computed over $\bminus$. As $\bminus \subseteq \dminus$ from Theorem \ref{thm:0}, extending this minimum over $\dminus$ further decreases its value, proving that $b_2(z) \ge b_0(z)$. $\bM(z) \ge b_2(z)$ follows by comparison between equations (\ref{eq:bmz_criterion}) and (\ref{eq:b2_def}), where $\bminus \cap \cplus = \bminus$ and $\sigma_z^2 \ge 0$. Notice that $b_0(z)$ can be larger or lower than $1$ depending on the sign of $\left[ \rhos, \Phi \right]$ over the region $\dminus \cap \bplus$, for $z \in \projz(\bminus)$.
\end{proof}

\begin{theorem}
\label{thm:b0}
   Let $\dminus \subseteq \opzero$: then $b_0(z) \ge 1$ over the rectangular strip $R \ge 0$ and $z \in \projz(\dminus)$, and $1 \le b_0(z) \le b_2(z) \le \bM(z)$ over the strip $z \in \projz(\bminus) \subseteq \projz(\dminus)$.
\end{theorem}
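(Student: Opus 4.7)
The plan is to decompose the theorem into its two assertions and reduce each to results already established earlier in Appendix \ref{app:tech_res}. For the first assertion ($b_0(z) \ge 1$ over the strip $z \in \projz(\dminus)$), the natural approach is to use the second, commutator form of $b_0(z)$ given in equation (\ref{eq:b0z_criterion}), namely
\begin{equation*}
b_0(z) = \min_{\dzminus}\left(1 + \frac{[\rhos,\Phi]}{|D|}\right).
\end{equation*}
The hypothesis $\dminus \subseteq \opzero$ means that $[\rhos,\Phi] \ge 0$ at every point of $\dminus$, and in particular on every radial section $\dzminus$; since $|D| > 0$ on $\dzminus$ by definition, the integrand inside the minimum is $\ge 1$ pointwise, so the minimum is at least $1$. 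This yields $b_0(z) \ge 1$ for every $z \in \projz(\dminus)$.

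For the second assertion, first I would observe that $\bminus \subseteq \dminus$ by Theorem \ref{thm:0}, so $\projz(\bminus) \subseteq \projz(\dminus)$. Combined with the hypothesis $\dminus \subseteq \opzero$, this gives $\bminus \subseteq \opzero$, which is precisely the hypothesis of Theorem \ref{thm:bMb0b2}. Invoking Theorem \ref{thm:bMb0b2} then supplies the chain $b_0(z) \le b_2(z) \le \bM(z)$ on the strip $z \in \projz(\bminus)$. Since $\projz(\bminus) \subseteq \projz(\dminus)$, the first assertion of the present theorem already ensures $b_0(z) \ge 1$ on this smaller strip, and concatenating the two gives the required $1 \le b_0(z) \le b_2(z) \le \bM(z)$.

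Because every ingredient is already in place, there is no genuine obstacle here; the theorem is essentially a bookkeeping step consolidating the previous results under the stronger sign hypothesis on the commutator over all of $\dminus$ (rather than merely over $\bminus$). The only subtlety worth flagging in the write-up is the domain-of-definition issue, i.e. making explicit that the strip on which $b_0(z)$ is meaningful contains $\projz(\bminus)$, which is why the combined inequality can be asserted on that common strip without any further argument.
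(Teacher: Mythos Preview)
Your proposal is correct and follows essentially the same approach as the paper: use the commutator form of $b_0(z)$ in equation~(\ref{eq:b0z_criterion}) together with the hypothesis $\dminus \subseteq \opzero$ to obtain $b_0(z) \ge 1$, then invoke Theorem~\ref{thm:0} to deduce $\bminus \subseteq \dminus \subseteq \opzero$ and apply Theorem~\ref{thm:bMb0b2} for the remaining inequalities. Your remark on the domain-of-definition issue and on how the stronger hypothesis resolves the sign indeterminacy of $b_0(z)$ left open in Theorem~\ref{thm:bMb0b2} matches the paper's closing comment almost verbatim.
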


\begin{proof}
    By assumption $\dminus \subseteq \opzero$, so that from equation (\ref{eq:b0z_criterion}) $b_0(z) \ge 1$ for $z \in \projz(\dminus)$. The second part of the Theorem descends immediately from the inequality just proved and Theorem \ref{thm:bMb0b2}, because from Theorem \ref{thm:0} $\bminus \subseteq \dminus$ ($\subseteq \opzero$ by assumption). Notice how strengthening the hypothesis of Theorem \ref{thm:bMb0b2} (i.e. $\bminus \subseteq \opzero$) to that of the present case ($\dminus \subseteq \opzero$) allows to establish whether $b_0(z)$ is larger or smaller than unity, then resolving the indeterminacy mentioned at the end of the proof of Theorem \ref{thm:bMb0b2}.
\end{proof}


\bsp	
\label{lastpage}
\end{document}